\documentclass[a4paper,USenglish]{lipics-v2021} 


\graphicspath{{./graphics/}}

\bibliographystyle{plainurl}

\title{Less is More: Fairness in Wide-Area Proof-of-Work Blockchain Networks} 


\author{Yifan Mao}{the Ohio State University}{}{}{}

\author{Shaileshh Bojja Venkatakrishnan}{the Ohio State University}{}{}{}

\authorrunning{J. Open Access and J.\,R. Public} 

\Copyright{Jane Open Access and Joan R. Public} 

\ccsdesc[100]{\textcolor{red}{Replace ccsdesc macro with valid one}} 

\keywords{Dummy keyword} 

\category{} 

\relatedversion{} 



\acknowledgements{I want to thank \dots}

\nolinenumbers 

\EventEditors{John Q. Open and Joan R. Access}
\EventNoEds{2}
\EventLongTitle{42nd Conference on Very Important Topics (CVIT 2016)}
\EventShortTitle{CVIT 2016}
\EventAcronym{CVIT}
\EventYear{2016}
\EventDate{December 24--27, 2016}
\EventLocation{Little Whinging, United Kingdom}
\EventLogo{}
\SeriesVolume{42}
\ArticleNo{23}

\newtheorem{lem}{Lemma}
\newtheorem{thm}{Theorem}

\begin{document}

\maketitle

\begin{abstract}
Blockchains are becoming increasingly important in today's Internet, enabling large-scale decentralized applications with strong security and transparency properties. 
In a blockchain system, participants record and update the server-side state of an application as blocks of a replicated, immutable ledger using a consensus protocol over a peer-to-peer network. 
Mining blocks has become lucrative in recent years; e.g., a miner receives over USD 200,000 per mined block in Bitcoin today.  
A key factor affecting mining rewards is the latency of broadcasting blocks over the network. 
In this paper, we consider the problem of topology design for optimizing mining rewards in a wide-area blockchain network that uses a Proof-of-Work protocol for consensus. 
Contrary to general wisdom that a faster network is always better for a miner, we show that increasing network connectivity (e.g., by adding more neighbors) is beneficial to a miner only up to a point after which the miner's rewards degrade.   
This is because when a miner improves its connectivity, it inadvertently also aids other miners in increasing their connectivity.  
An optimal action for a miner, then, is to increase connectivity only to an extent that the miner's own gains strictly outweigh the gains of other miners. 
Similarly, for a set of miners colluding to increase connectivity between themselves, we show that there exists an optimal size for the set at which rewards are maximized. 
A collusion set larger than this optimal diminishes the reward per miner within the set.  
We formalize the topology design problem, and provide extensive experimental results and theoretical evidence to support our claim.   
\end{abstract}

\section{Introduction}
\label{s: intro}

Blockchains have emerged as a secure, transparent and decentralized alternative to centralized servers for implementing important applications such as payments, digital marketplace and smart contracts~\cite{abou2019blockchain}.
Blockchains function over a peer-to-peer (p2p) overlay, where each node in the overlay maintains a replica of an immutable ledger called the blockchain, in which transactions (e.g., payments) are recorded as they are made over the network.  
To function correctly, it is essential for the blockchain replicas at different nodes to be in agreement with each other which is achieved using a consensus protocol. 
Since the Nakamoto consensus~\cite{genesis} was proposed in Bitcoin in 2008, several alternative consensus algorithms have been proposed in recent years with useful performance properties~\cite{Gilad2017algorand, bagaria2019prism}. 
Nevertheless, a number of mainstream blockchains---notably cryptocurrencies---rely on the Nakamoto consensus, also called as proof-of-work (PoW), for their operation~\cite{genesis,ethereum,Dogecoin,mukhopadhyay2016brief}.  
The combined market capitalization of these cryptocurrencies currently stands over a trillion USD~\cite{cryptmarkcap}. 

In a PoW cryptocurrencies, new payment transactions are consolidated into blocks and appended to the blockchain in a process called mining. 
A block in Bitcoin, for instance, contains an average of around 2000 payment transactions~\cite{avgtxnpblk}. 
New blocks are generated by nodes in the p2p overlay, called miners, that compete to solve a computationally difficult cryptographic hash puzzle for each new block. 
The computational challenge involved in creating new blocks dissuades attackers from readily generating incorrect blocks and secures the blockchain. 
Blocks are mined in sequence with each mined block holding a reference to a unique parent block that was mined before the block.
When a miner mines a new block, it broadcasts the block over the p2p network so that other miners can start mining the next block using the received block as a parent. 

Miners are rewarded in proportion to the amount of computational effort they put in for mining new blocks~\cite{kroll2013economics}. 
The amount of rewards earned by a miner depends on the number of blocks the miner mined that has been included in the blockchain. 
Mining is an expensive undertaking---miners often invest in specialized computational hardware and pay hefty electricity fees to keep their infrastructure running~\cite{thum2018economic}. 
Thus miners are incentivized to mine as many valid blocks as possible to maximize their individual payoffs and offset investment cost.  
Increasing the computational effort (i.e., the hashrate of a miner~\cite{rosenfeld2014analysis}) is a natural approach to increasing the number of blocks mined, and hence the rewards earned.  
For a fixed hashrate a miner's reward crucially depends on how well-connected the miner is to other miners in the p2p network; connectivity affects a miner's   
ability to send and receive blocks faster than the other miners over the network~\cite{wan2019evaluating,decker2013information,gencer2018decentralization}. 
Recent works have explored various ways including topology redesigns~\cite{mao2020perigee,rohrer2019kadcast,park2019nodes}, fast relay networks~\cite{falcon,klarman2018bloxroute,fibre}, and coding or compressing blocks~\cite{ozisik2017graphene,chawla2019velocity} to reduce the latency of block propagation in the network.
The conventional wisdom here is that a faster network that is able to disseminate blocks quickly is generally more desirable than a slower network, as it allows miners to hear about newly mined blocks quicker thus reducing the likelihood of mining new blocks over older blocks (a process called as forking~\S\ref{s: background}).
A faster network also facilitates a higher rate of mining blocks, which has positive implications for increasing the rate at which payment transactions are confirmed in the blockchain (i.e., the throughput) and reducing transaction confirmation delays~\cite{bagaria2019prism}. 

In this paper, we study the impact of the network topology on the mining reward of each miner in PoW blockchains operating over wide area p2p networks such as Bitcoin or Ethereum.  
Miners in these networks are often geographically far away from each other resulting in significant (e.g., 100s to 1000s of milliseconds or more~\cite{decker2013information}) propagation delays while broadcasting blocks~\cite{croman2016scaling,decker2013information} compared to the rate of block generation in the network. 
For instance, blocks are mined once every 13 seconds on average in Ethereum, once every minute in Dogecoin, once every 2.5 minutes in Litecoin and once every 10 minutes in Bitcoin. 
In private blockchain networks, the mining rate can be configured to be even faster~\cite{schaffer2019performance,hari2019accel}.  

In these cases, the topology of the p2p overlay has a significant effect on the dynamics of block propagation and consequently the rewards earned by competing miners. 
We consider topology design as a game played between miners, where the number and choice of peers (miners) a miner connects with is the action and the average fraction of blocks mined by the miner that are included in the blockchain is the reward for the miner.   
We assume all miners honestly participate in the block mining and dissemination process. 
Since analyzing the space of all possible strategies that miners can follow and their associated rewards is intractable, we study two simple, canonical settings: 
(i) a non-cooperative model where each miner selfishly decides the number and choice of neighbors to maximize its own reward; 
(ii) a cooperative model in which miners collude to form clusters, with each miner choosing neighbors primarily within its own cluster.  

Our main result is a counter-intuitive phenomenon showing the existence of an `optimal' level of connectivity, that is not necessarily the `maximum' possible connectivity, where payoffs are maximized for a miner.
For instance, it is strictly suboptimal for a miner to establish connections to all other miners in a sparse network compared to connecting with a small number of carefully chosen peers.  
In the non-cooperative model, we show that increasing the number of neighbors benefits a miner only up to a point.
Attempting to increase connectivity beyond this point costs the miner due to increased networking resources (e.g., bandwidth) while reducing the reward gains. 
In the cooperative model, we show that miners part of a tightly-connected `dominant' cluster receive higher than their fair share of rewards, while miners not part of the dominant cluster receive suboptimal payoffs. 
Here, the size of the dominant cluster crucially affects the reward gains earned by colluding miners, with the existence of an optimal cluster size where the rewards are maximized. 

The reason we observe reward gains that are non-monotonic with connectivity is because for a miner to maximize its reward, it is not only essential for the miner to be well-connected to the rest of the network (to receive and send blocks fast), but also for the rest of the miners to {\em not} be well-connected so that they receive blocks slower than the miner on average.
A miner that increases its connectivity to the rest of the network by increasing its number of neighbors is not only creating new low-latency paths between itself and others, but also indirectly creating new low-latency paths between other miners through itself.
On the other hand, a miner that has too few neighbors risks having paths that are of higher latency to other miners.   
The right balance occurs by choosing an appropriate number and choice of neighbors to connect with. 
While prior works have studied how a miner's reward increases with increasing its connectivity to the rest of the network~\cite{shahsavari2019theoretical,xiao2020modeling,cao2021characterizing,gencer2018decentralization,putri2018effect,eyal2014majority}, the counter-intuitive phenomenon of rewards decreasing with increasing connectivity has not been studied to our best knowledge.

Experimentally we observe that clustering of miners occurs even with a random topology simply due to the non-uniform geographical distribution of miners around the world.  
For example, if 70\% of all miners in the world are located in either Europe or North America 
(services monitoring cryptocurrency networks report this to be the case~\cite{bitnodescom}), then even with a constant degree random topology the miners in Europe and North America will end up being a dominant cluster with low intra-cluster latency compared to miners in other continents. 
As such, when considering real-world geographies, we observe an unfair advantage to miners of regions where there is a rich concentration of miners and vice versa. 
However, here too we observe that the payoffs to miners in central locations (Europe or North America, in our experiments) degrades if they increase their number of neighbors beyond a threshold. 
Miners in remote locations strictly benefit by increasing the number of connections they make. 

We summarize the key contributions of the paper below. 
\begin{enumerate}
\item We propose a simple theoretical model to analyze the fraction of successful blocks mined by each miner in a wide-area PoW blockchain network with heterogeneous link latencies. 
\item For the noncooperative model, we show that increasing number of neighbors helps only up to a point beyond which rewards do not increase. 
\item For the cooperative model, we derive the payoff each miner receives as a function of size of the dominant cluster. 
We show that the payoffs to miners in the dominant cluster are maximized when the size of the cluster is 70\% of the network size under assumptions on network latency. 
\item We develop an event driven simulator on Omnet++~\cite{omnetppsim} to simulate mining and block propagation over a world-wide network of 270 miners based on Bitcoin's network protocol and validate our theoretical findings. We also present candidate strategies that miners can follow in order to alleviate clustering bias and maximize their rewards. 
\end{enumerate}

\section{Background}
\label{s: background} 

We use Bitcoin as a representative example of a blockchain system, to provide a background on PoW blockchain networks. 
Other PoW blockchains follow a conceptually similar architecture, potentially with different algorithmic parameters. 
Bitcoin is a digital payment system, that allows end-users to send and receive currency between each other in a secure way. 
Payments made over Bitcoin are recorded in an immutable ledger called the blockchain, as an ordered sequence (earlier payments appear before later payments), which allows the blockchain to verify the validity of payments and prevent double spending.  
All transactions made since the inception of Bitcoin in 2008 are registered in the blockchain, making it a fairly large database (over 300 Gigabytes today). 

\smallskip
\noindent
{\em The blockchain p2p network.}
Bitcoin operates over a peer-to-peer network of servers and clients. 
Clients are end-devices running a lightweight `wallet' application for making payment transactions.   
Servers are typically equipped with high-performance computational hardware (ASICs, GPUs etc.) which they use to validate transactions and extend the blockchain.   
Clients connect to one or more servers via TCP connections; servers accept incoming connection requests from clients. 
In addition servers also connect to other servers via TCP. 
Bitcoin specifies each server can have a maximum of 125 connections to other clients or servers. 

\smallskip 
\noindent 
{\em Proof-of-Work consensus.}
The blockchain is structured as an ordered sequence of blocks, where each blocks contains a small number (e.g., 2000) of ordered transactions. 
The very first block in Bitcoin is called as the genesis block. 
Servers in Bitcoin each maintain a local replica of the blockchain, which they use for verifying new payment transactions and extending the blockchain. 
To provide a secure, trustworthy payment service, it is essential for the blockchain replicas at different servers to be consistent with each other.   
Consistency across the blockchain replicas in Bitcoin is achieved via the Proof-of-Work (PoW) consensus protocol~\cite{genesis}. 

When a client makes a payment in Bitcoin, it encodes the payment information as a transaction message and broadcasts it over the p2p network. 
Each server receives transaction messages, verifies their validity (against older payment transactions recorded in the blockchain), batches the verified transactions in to a new block and digitally signs the block . 
For a block to be considered valid a server must also include the solution to a computationally difficult cryptographic hash puzzle with the block. 
Servers solve this puzzle via brute-force guesswork; the time it takes for a server to solve a puzzle is commonly modeled as an exponential random variable whose rate is proportional to the number of hash computations per second the server's computational hardware can perform~\cite{bagaria2019prism}. 
Each block also includes a reference to a unique block on the blockchain called the parent block (a block is a child block of its parent). 
The number of blocks encountered while traversing along the parent links starting from a block, until the genesis block is reached, is called as the height of a block. 
Bitcoin follows the `longest chain protocol' for choosing parent blocks when creating a new block: a new block must point to a block whose height is the greatest on the blockchain.
This process of generating a new block is called as mining. 

When a server mines a block, it immediately broadcasts the block to other servers via flooding over the p2p network. 
Servers receiving a block verify the validity of the block (e.g., validity of transactions in the block, correctness of the solution to the cryptographic puzzle etc.).
If found valid, they append the block in to their local blockchain replica and relay the block to their neighboring peers in the network.
They then start mining the next block using the received block as parent provided it has the greatest height on the blockchain.  
Occasionally, it is possible for two blocks to be concurrently mined by different miners over the same parent block.
This creates a {\em fork} in the blockchain, where there are multiple blocks at the same height.  
In this case a server mining for a new block must mine on top of the block it {\em received the earliest}, from among the blocks at the same (greatest) height. 
Over time, one of the forked chains will eventually grow longer than others to become part of the longest chain. 
Only transactions included in blocks on the longest chain are considered confirmed by the network. 
Blocks not on the longest chain do not contribute to extending the ledger of confirmed transactions. 

\smallskip
\noindent
{\em Mining rewards.} 
To incentivize servers to mine new blocks, a server receives a monetary reward for each block it mines that becomes part of the longest chain. 
The reward comes from two sources. 
First, whenever a payment transaction is made over the blockchain the payer includes a transaction fee in the transaction which goes to the server that verified the transaction and included it in its block as a reward.   
In 2021 the average fee per transaction over Bitcoin was around 3 USD~\cite{btcavgfee}.  

The second source of reward for a server comes from minting new cryptocurrency each time a block is mined. 
In Bitcoin, currently a miner is allowed to mint 6.25 bitcoins (worth over $200,000$ USD today) every time it mines a block.
These newly minted bitcoins are included as a separate transaction in the block, where the payee is the server that mined the block and the payer is a special account address reserved specifically for this purpose. 
Today, the rewards earned from minting new bitcoins represent $> 98 \%$ of the total reward earned by a miner in a block~\cite{txnfeepbr} (compared to reward from transaction fees which is $< 2\%$ of the overall reward). 
However, the amount of bitcoins that a miner can mint per block is set to exponentially diminish over the years in Bitcoin. 
Therefore, transaction fees would eventually become the sole source of reward for miners.  

\smallskip
\noindent 
{\em Impact of network on rewards.} 
Mining is a lucrative business in cryptocurrencies like Bitcoin. 
To maximize the likelihood of a newly mined block becoming part of the longest chain, it is essential that the block is broadcast to other servers as quickly as possible. 
If broadcast is delayed, there is a risk that the block will lose out to a competing block mined at the same height which reached other servers earlier. 
Conversely, a server must also hear about newly mined blocks as quickly as possible. 
This is so that a server does not waste time and computation mining a block at a certain height, when a block at that height has already been mined by another server. 
We use the terms servers, miners, nodes and peers interchangeably in this paper.
We also restrict our focus to only the overlay network of miners, and do not consider client connections as they do not have a significant impact on the block propagation dynamics.

\section{System Model}
\label{s: sysmodel}

\subsection{Network Model}
\label{s:netmodel}

We model the blockchain network as an undirected graph $G(V, E)$ comprising of a set of miners $V$ and communication links (e.g., TCP links as in Bitcoin) $E$ between them. 
Each miner $v \in V$ has an associated hash power $h_v \geq 0$. 
The time it takes for a miner to mine a block on top of a previous block is an exponential random variable with rate $h_v$ (mean $1/h_v$). 
This time it takes for a miner to mine a block, is independent of the time taken to mine prior blocks. 
We consider a uniform hash power, i.e., $h_v = h_u, \forall u,v \in V$ in this paper; however, our results can be extended to non-uniform hash power assignment in a straightforward way.

When a miner mines a block, it broadcasts the block to all of its neighbors in $G$. 
Each link $(u, v)$ has a latency $l_{(u,v)} \geq 0$ which is the time it takes for the block to propagate from $u$ to $v$ (we assume $l_{(u,v)} = l_{(v,u)}$ for any edge $(u,v) \in E$) through link $(u, v)$. 
When a miner $v$ receives a block (that has not been previously received) from neighbor $u$, it spends a time $c_v \geq 0$ in validating the block following which it forwards the block to all of its neighbors other than $u$. 
If a miner receives a block that it has previously seen, it does not forward the block to its neighbors. 

Each miner maintains a local replica of the blockchain based on the blocks it has received so far. 
Before the mining process starts, we assume there is a unique genesis block that is known to all miners. 
For any block $b$, we call the number of blocks on a path from the genesis block to block $b$  on the blockchain as the height of block $b$.
The genesis block has a height of 0.   
Following the longest chain protocol, a miner mines a new block on top of the block with the greatest height in its local blockchain. 
If there are multiple longest chains, then mining is done on top of the earliest received block from among the blocks having the greatest height.
Before a miner could mine a block, if it receives a new block that extends its longest chain height, the miner restarts its mining to mine on top of the newly received block. 
If a miner successfully mines a new block, it immediately starts mining the next block on top of the new block. 
We assume all miners follow protocol, and leave a discussion on various possible adversarial actions that miners can take to future work.  
We do not model transaction generation and broadcast. 

\subsection{Reward Model}

When a block mined by a miner is included in the longest chain, the miner incurs a reward for mining the block as explained in Section~\ref{s: background}. 
While a number of factors affect the rewards earned by a miner per block---such as the plaintext size of the transactions in the block, number of unconfirmed transactions currently in the network, height of block etc.---for simplicity we assume the reward per block included in the longest chain to be a constant. 
Using this model as a motivation, we consider a reward model in which the aggregate reward earned by a miner is proportional to the fraction of the blocks mined by the miner in the longest chain. 
For a miner $v \in V$, we denote by $F_v \in [0, 1]$ the average fraction of blocks mined by $v$ that are included in the longest chain over a long time horizon, i.e., assuming the following limit exists
\begin{align}
F_v = \lim_{T \rightarrow \infty } \frac{\sum_{b \in \mathcal{B}(T)} \mathbf{1}_{\text{block } b \text{ is mined by } v \text{ and is included in the longest chain}}}{\sum_{b \in \mathcal{B}(T)} \mathbf{1}_{\text{block } b \text{ is included in the longest chain}}}, \label{eq: Fv define cont}
\end{align}
where $\mathcal{B}(t)$ denotes the set of all blocks mined up to time $t \geq 0$ and $\mathbf{1}_{(\cdot)}$ is the indicator function. 

In a fair network with $n$ miners, $F_v = 1/n$ for all $v \in V$. 
A network is unfair if there exists miners that have a greater than $1/n$ fraction of blocks mined by them in the longest chain, and others that have less than $1/n$ fraction of blocks in the longest chain. 
We present how miners can gain an advantage by carefully choosing whom to connect with in the network, and what the affected miners can do to alleviate their disadvantage.\footnote{Since $\sum_{v\in V} F_v = 1$, if there exist miners that receive greater than their fair share of rewards, there must exist miners that receiver lesser than their fair share.} 
Outside of choosing neighbors carefully, the miners do not deviate from protocol in any way. 
Choosing neighbors carefully in a blockchain network (but otherwise following protocol) is not considered as adversarial behavior. 
Hence, the policies we present in this paper are ``legal'' methods by which miners can obtain an unfair advantage over others.  

We also model the costs incurred by miners for electricity, cooling etc. while mining blocks.  
If the average cost incurred per mined block (whether or not it is included in the longest chain) is not negligibly small compared to rewards obtained per block, then a miner is not only interested in maximizing the fraction of blocks it mined that become part of the longest chain, but also in minimizing the fraction of wasted blocks it mined that did not become part of the longest chain. 
Note that maximizing the fraction of blocks mined by a miner included in the longest chain is distinct from minimizing the fraction of blocks mined by a miner that have been excluded from the longest chain. 
For a miner $v \in V$, we let $W_v$ be the fraction of blocks mined by $v$ that are excluded from the longest chain over a long time horizon, i.e., assuming the following limit exists 
\begin{align}
W_v = \lim_{T \rightarrow \infty} \frac{\sum_{b \in \mathcal{B}(T)} \mathbf{1}_{\text{block } b \text{ is mined by } v \text{ and is excluded from the longest chain}}}{\sum_{b\in \mathcal{B}(T)} \mathbf{1}_{\text{block } b \text{ is mined by } v}}.
\end{align}
Depending on the precise values for reward earned by a miner per block in the longest chain, and the cost incurred per mined block, a miner $v$ may be interested in maximizing a weighted difference of its $F_v$ and $W_v$ values. 
We treat $F_v$ and $W_v$ as separate performance metrics in the topologies that we consider.  

\subsection{Action Model}

Blockchain networks use a decentralized algorithm for constructing the p2p network. 
We view topology construction as a game between the $n$ miners, and study simple classes of topologies that the miners can construct to maximize their rewards and/or minimize wasted cost.
We assume each miner $v \in V$ is located at a fixed geographical location within a wide area (e.g., the world or a large continent).
The location of the miners induces an underlying latency $l_{(u,v)}$ between every pair of miners $u$ and $v$.
If $u$ and $v$ have a link between them, this is the latency of sending a block from $u$ to $v$ (or vice-versa) over this link. 
If $u$ and $v$ do not have a link between them, this is the latency with which a block sent from $u$ to $v$ (or vice-versa) would propagate had there been a link between $u$ and $v$. 
We assume a miner $u$ has knowledge about $l_{(u,v)}$ for all $v \in V$, but does not know the latencies between other miner pairs. 
We provide two control knobs that miners can use to change topology: 
\begin{enumerate}
\item {\em Degree}. A miner $v$ specifies a degree $d_v$ which allows $v$ to make $d_v$ outgoing connections; 
\item {\em Choice of neighbors}. A miner $v$ specifies the set of $d_v$ miners to which to connect to. 
\end{enumerate}
We assume each miner knows the IP address of all other miners in the network. 

We consider two settings, noncooperative and cooperative, by which miners can change topology to their advantage. 
In the noncooperative model, miners do not collude.
Each miner chooses its neighbors on its own without any assistance from other nodes. 
In the cooperative model, miners are allowed to collude and choose neighbors preferentially from the miners they are colluding with.

\section{Motivation}
\label{s:motivation} 

We motivate our results using a simple toy example, that illustrates the effect of network topology on mining rewards. 
We consider a network $G'(V', E')$ with 99 nodes connected as a random graph of average degree 4, with each node initiating connections to 2 random neighbors. 
Let $v$ be a 100-th node that is looking to maximize its rewards by choosing its neighbors carefully among the 99 nodes in $G'$. 
The latency between any pair of nodes $u, u'$ among the 100 nodes is 1 time unit. 
The block validation delay is set to 0.01 time units. 
The mining rate $h_u$ of each node $u$ is 1/4000. 
At this mining rate, blocks are mined once every 40 time units on average which is roughly 10 times the broadcast delay in $G'$.
Figure~\ref{fig:non-coop-511} shows the rewards earned by the $v$, as $v$'s degree varies from 1 to 80. 
For each degree $d$, node $v$ connects to $d$ random neighbors in $G'$ (Figure~\ref{fig:noncoop_graph}).  
Each network topology is run till 10,000 blocks are mined in the blockchain. 
The experiment is repeated 5 times per topology, and over 5 different random instances of the topology. 
We see that initially node $v$'s reward increases as the degree increases. 
However, once degree reaches 25 further increase in the degree leads to a decrease in reward performance. 
\begin{figure} 
\begin{subfigure}[t]{0.47\textwidth}
\centering
\includegraphics[width=\textwidth]{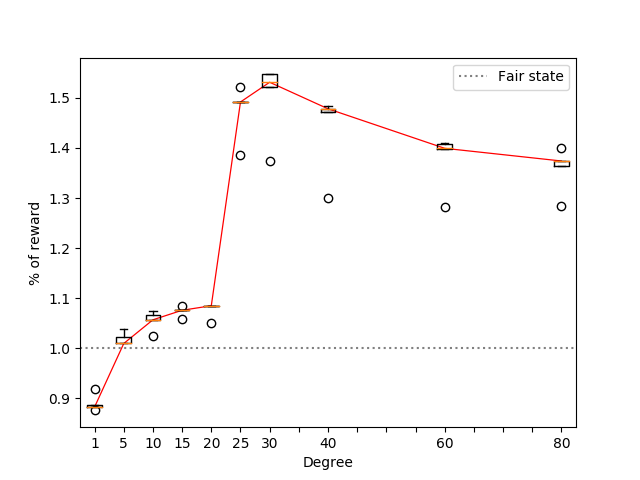} 
\caption{}
\label{fig:non-coop-511}
\end{subfigure}
\hfill 
\begin{subfigure}[t]{0.53\textwidth}
\centering
\includegraphics[width=\textwidth]{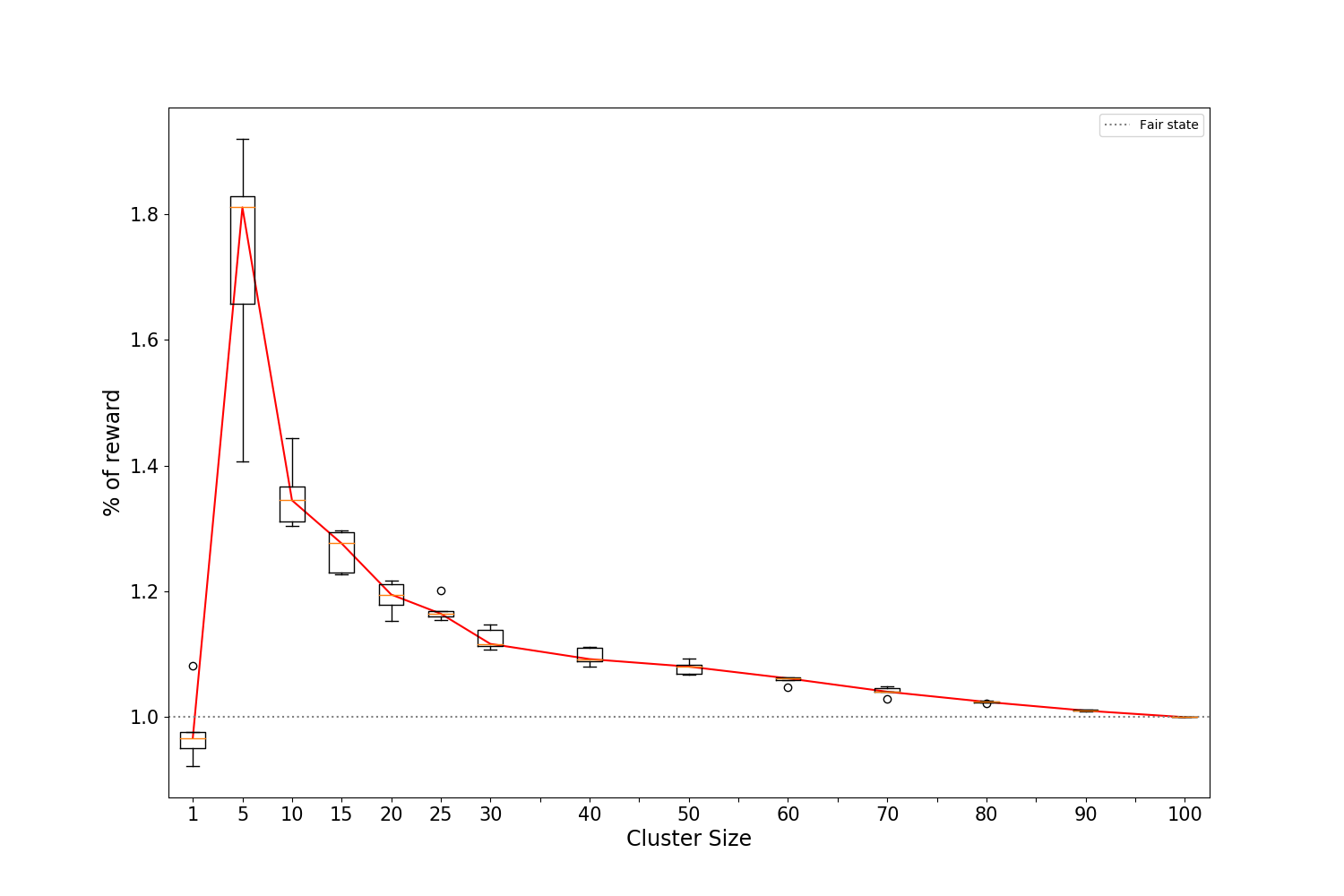} 
\caption{}
\label{fig:coop-boxplot}
\end{subfigure}
\caption{Fraction of blocks mined by $v$ in (a) non-cooperative model with increasing number of neighbors, and (b) cooperative model with increasing size of $v$'s cluster. Circle marks represent outliers.}
\label{fig:toyexam} 
\end{figure}

In the cooperative miner model, node $v$ may collude with a set of miners to maximize its reward. 
We consider the same 100 node network connected as a random graph with average degree 4 as before. 
Next, we choose a cluster of $d$ nodes including $v$ with each node connected to all other nodes in the cluster.  
The topology of the remaining $100-d$ nodes is set to a random graph of average degree 4. 
The $d$-node cluster is connected to the remaining nodes through only 2 edges as shown in Figure~\ref{fig:coop_graph}. 
Under this setting, we vary the size of the cluster $d$ and plot the average reward earned by miners within the cluster in Figure~\ref{fig:coop-boxplot}. 
Here too, we observe that as the cluster size increases initially there is an increase in the reward, but after a threshold size the reward starts decreasing. 

These observations illustrate a tight tension between increasing connectivity to reduce latency at a node and thereby indirectly helping to increase the connectivity at other nodes.  
The effect of topology on mining rewards is diminished if the mining rate is set such that the average time between mining consecutive blocks is much larger than the time taken to broadcast a block to a majority of nodes in the network.\footnote{Under an approximation, it can be shown that for a network with a given topology and set of underlying latencies, the relative change in a miner's reward due to the network topology is inversely proportional to the average time between mining consecutive blocks.} 
E.g., in Bitcoin a block is mined once every 10 minutes on average which is more than $60\times$ the block broadcast delay~\cite{decker2013information,miller2015discovering,kiffer2018better}. 
Whereas in systems like Ethereum where the average interblock time is only $10$ to $20\times$ the broadcast latency~\cite{wang2021ethna}, the topology is likely to play a significant role in the rewards obtained.\footnote{For simplicity we have not considered uncle blocks or the Kademlia overlay used in Ethereum in this paper~\cite{ritz2018impact,marcus2018low}. Factoring these aspects into the blockchain model is an avenue for future research. } 
Nevertheless, even in Bitcoin small biases in mining rewards due to network topology can add up over time resulting in significant differences in the number of blocks mined by (near-)optimally connected nodes and the others.
In the remainder of the paper, we have deliberately considered settings where the time between mining consecutive blocks is close to (less than $3\times$) the latency of broadcasting a block over the network for the purpose of clearly highlighting effects of topology on rewards. 
Our results hold even at lower mining rates, albeit with relatively smaller differences between the rewards of different miners. 
More importantly, we believe a systematic study of these effects (from both performance and security viewpoints) can help inspire a new family of highly-efficient network-aware consensus protocols that are much more scalable than today's blockchain solutions which often treat the network as a blackbox having simple high-level properties (e.g., synchrony). 

\begin{figure} 
\begin{subfigure}[t]{0.5\textwidth}
\centering
\includegraphics[width=\textwidth]{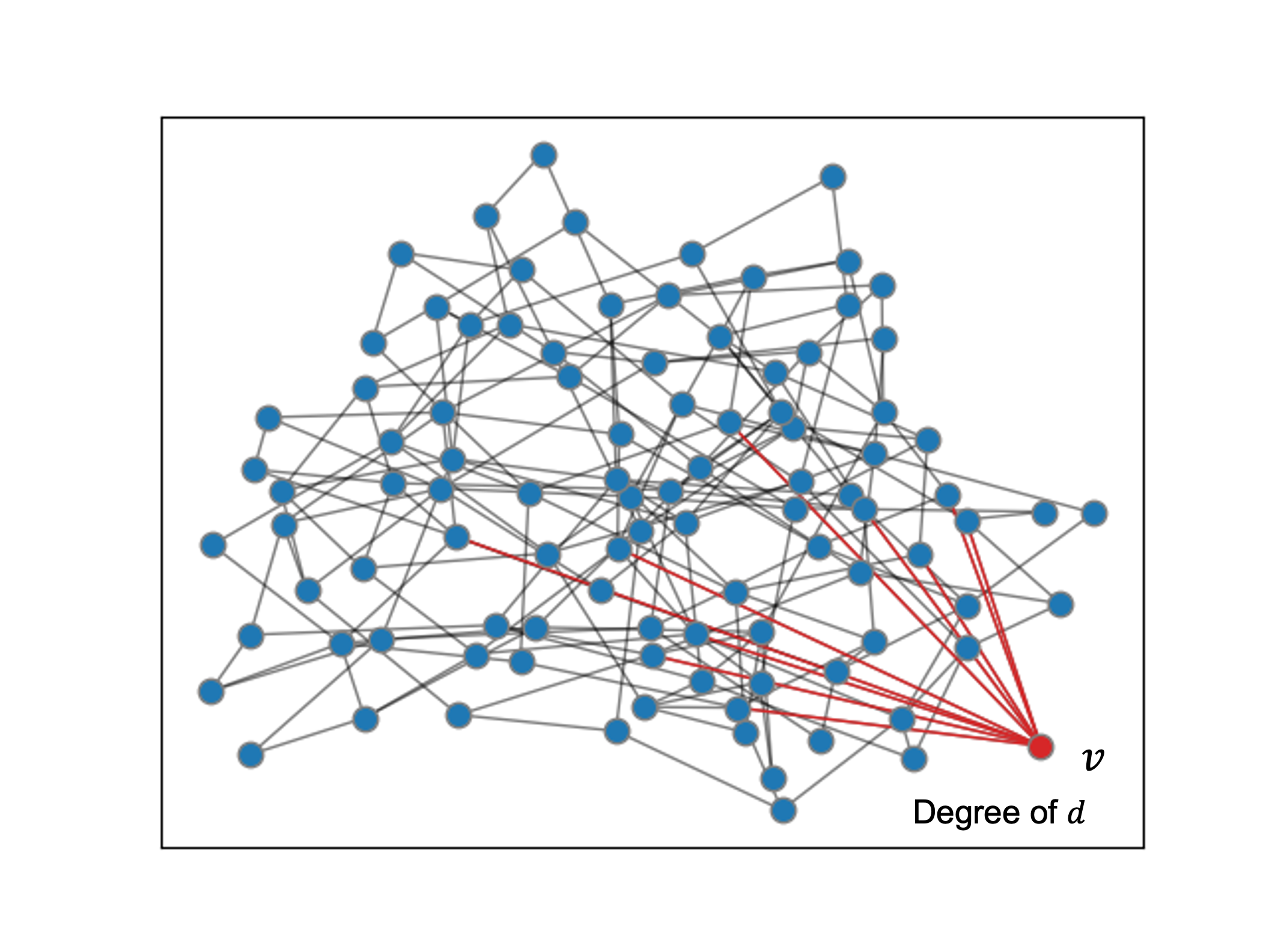} 
\caption{}
\label{fig:noncoop_graph}
\end{subfigure}
\hfill 
\begin{subfigure}[t]{0.5\textwidth}
\centering
\includegraphics[width=\textwidth]{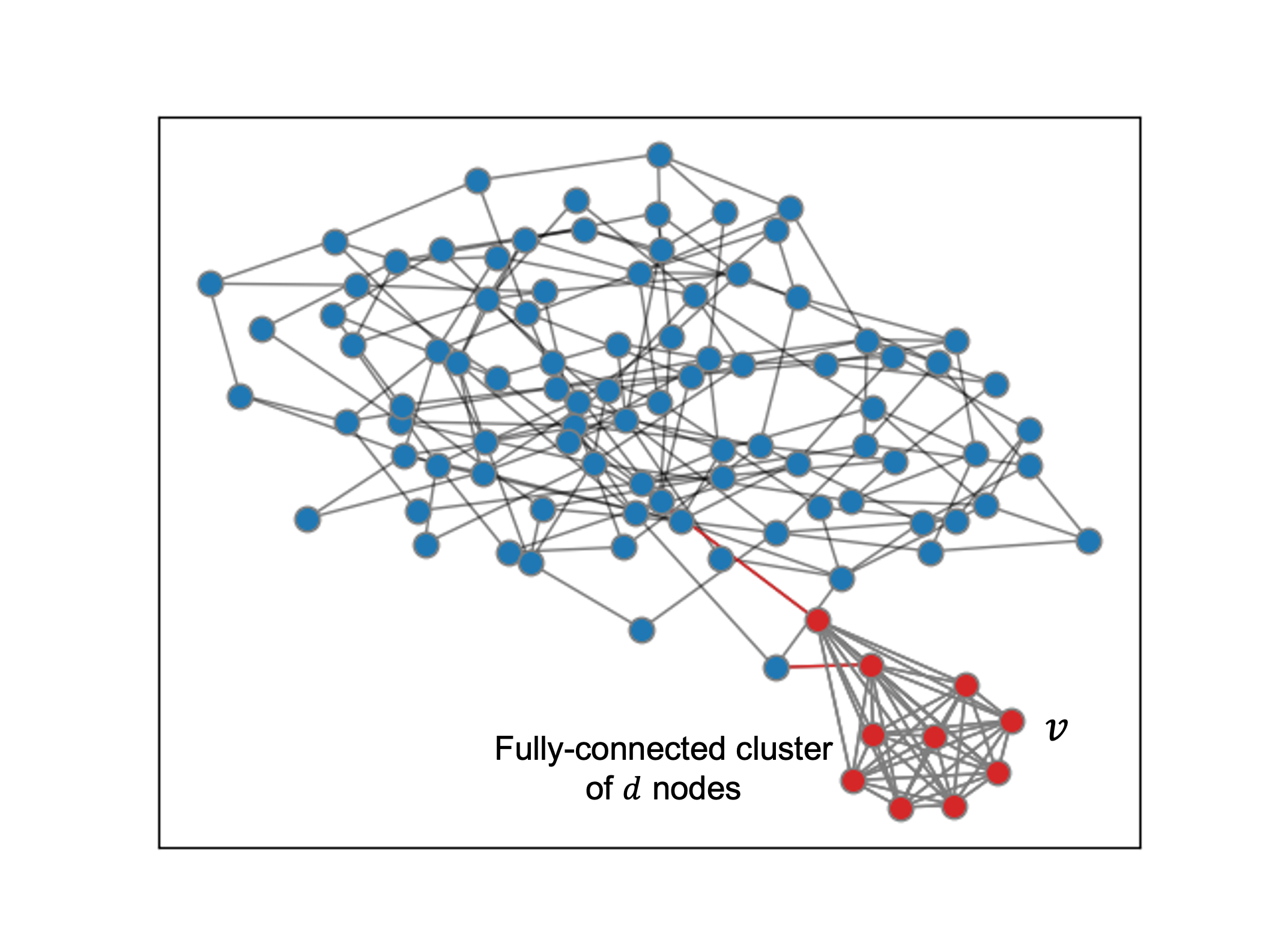} 
\caption{}
\label{fig:coop_graph}
\end{subfigure}
\caption{Example of network topology used in (a) noncooperative and (b) cooperative setting experiments of \S\ref{s:motivation}.}
\label{fig:toyexam_graph} 
\end{figure}

\section{Analysis}
\label{s: analysis}

To reason about the observations made in \S\ref{s:motivation}, we consider an analytical model of the blockchain network by which we can theoretically understand mining rewards earned under different topologies. 
The model we use in this section is simplified (vis-\`a-vis the model in Section~\ref{s: sysmodel}) to facilitate analysis.  

We consider a blockchain network as an undirected graph $G(V, E)$ with hash power $h_v = 1/|V|$ per node and link latency $l_{(u,v)} \geq 0$ for each link $(u,v) \in E$. 
For simplicity, we assume the block validation time $c_v = 0$ for all $v \in V$. 
A key difference between the model in \S\ref{s: sysmodel} and the present model, is the process of mining blocks. 
We assume blocks are mined in rounds, where in each round exactly one miner mines a block and broadcasts it over the network. 
We count rounds starting from 1 and use $\mathbb{N} = \{1, 2, 3 \ldots\}$ to denote the set of all rounds.
The miner that mines a block in a round is chosen randomly and independently of prior rounds, with $h_v$ being the probability of choosing miner $v$. 
When a miner mines a block, it broadcasts the block to the rest of the network as described in Section~\ref{s: sysmodel}.
Rounds are modeled to occur periodically, every 1 time unit. 
We refer to the genesis block as a block mined at round 0. 
A block mined during round $r$ will be referred to as block $r$. 

 For each blockchain network $G(V, E)$ we consider an associated latency graph $L(V, \tilde{E})$ comprising of miners $V$ and edges $\tilde{E}$ between every pair of miners.  
 Each edge $(u, v) \in \tilde{E}$ has a weight $\delta_{(u,v)} \geq 0$ which is the minimum time it takes for $v$ to receive a block mined by $u$ over network $G$ (or vice-versa, i.e., $\delta_{(u,v)} = \delta_{(v,u)}$). 
 The latency graph $L$ dictates the number of blocks mined by different miners that are included in the longest chain, and therefore is going to be the focus of our analysis.



For a miner $v \in V$, we define the average fraction of blocks mined by $v$ that are included in the longest chain,
\begin{align}
F_v = \lim_{R \rightarrow \infty } \frac{\sum_{r=1}^R \mathbf{1}_{\text{block } r \text{ is mined by } v \text{ and is included in the longest chain}}}{\sum_{r=1}^R \mathbf{1}_{\text{block } r \text{ is included in the longest chain}}}, \label{eq: Fv define}
\end{align}
and 
the fraction of blocks mined by $v$ that are excluded from the longest chain, i.e., 
\begin{align}
W_v = \lim_{R \rightarrow \infty} \frac{\sum_{r=1}^R \mathbf{1}_{\text{block } r \text{ is mined by } v \text{ and is excluded from the longest chain}}}{\sum_{r=1}^R \mathbf{1}_{\text{block } r \text{ is mined by } v}}. \label{eq:Wdefn}
\end{align}
as in \S\ref{s: sysmodel}, assuming the limits exist. 

\subsection{One Cluster}


\begin{figure} 
\begin{subfigure}[t]{0.45\textwidth}
\centering
\includegraphics[width=0.85\textwidth]{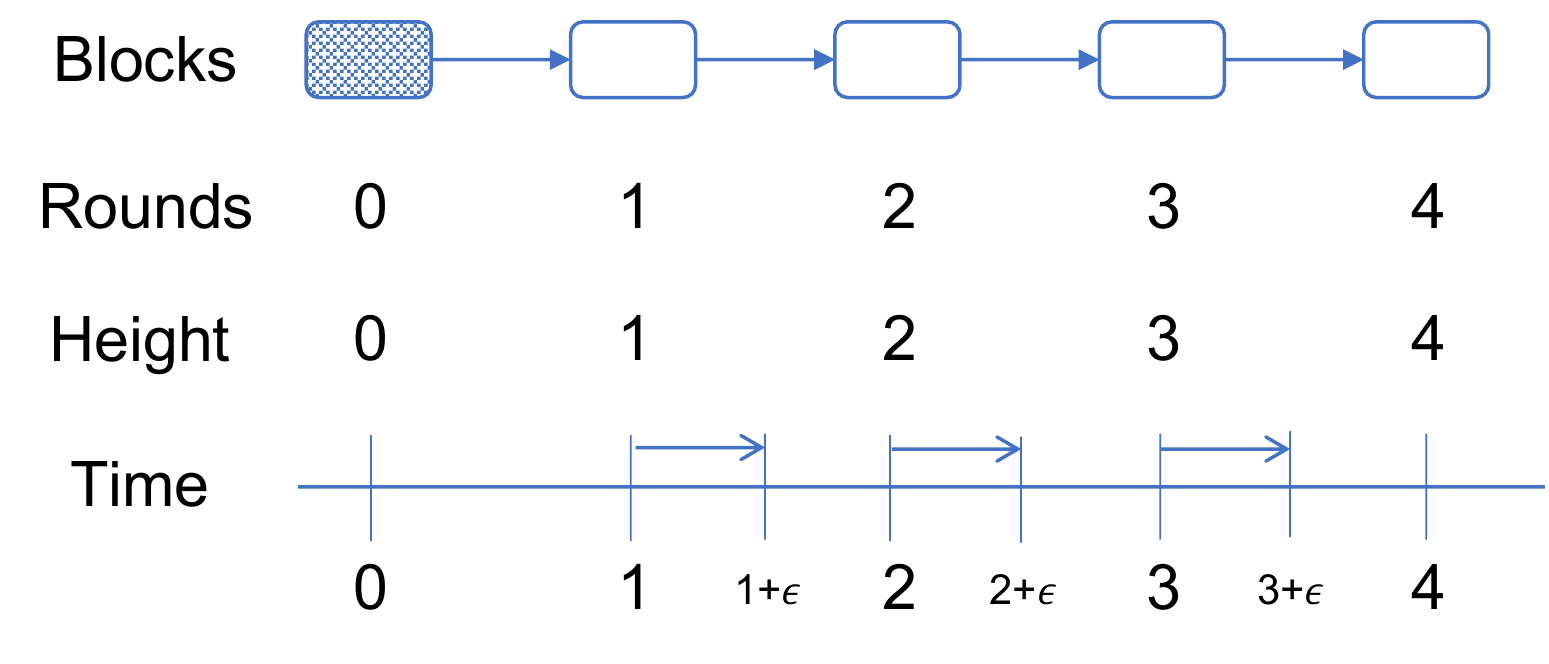} 
\caption{}
\label{fig:onecluster1}
\end{subfigure}
\hfill 
\begin{subfigure}[t]{0.55\textwidth}
\centering
\includegraphics[width=\textwidth]{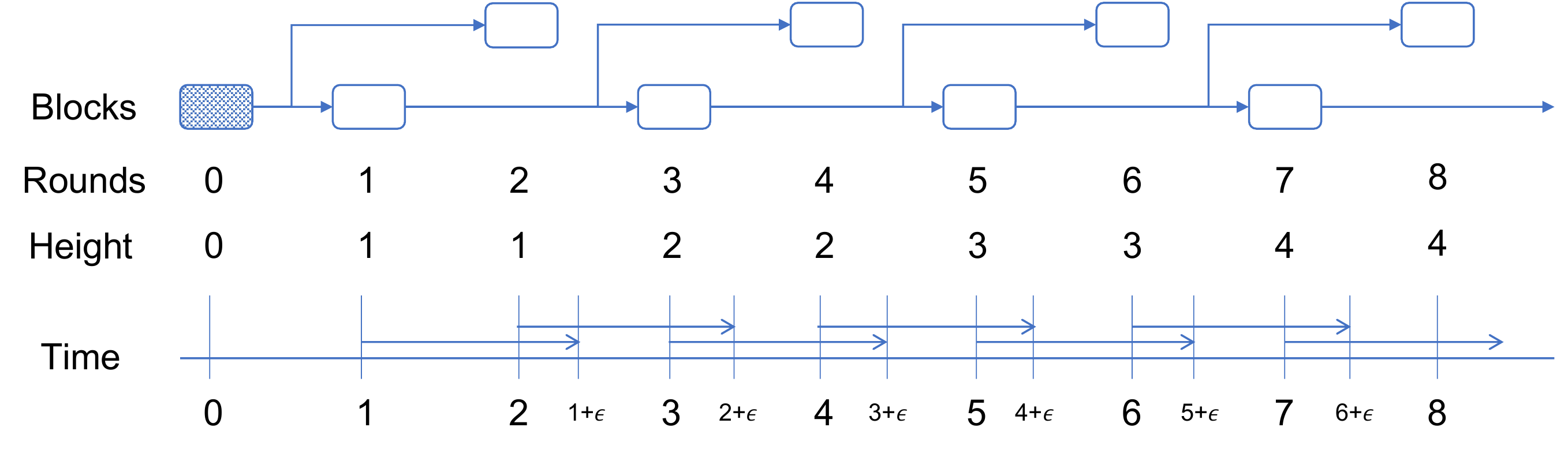} 
\caption{}
\label{fig:onecluster2}
\end{subfigure}
\caption{Evolution of the blockchain if miners are organized as a single cluster. (a) There is no forking if the link delay is less than the round length, (b) half of the blocks are in a forked chain if the link delay exceeds the round length. Time shows when each block mined reaches other nodes.}
\label{fig:onecluster} 
\end{figure}

As an illustration of the model, we first consider the simple case where the latency graph $L$ is such that $\delta_{(u,v)} = \epsilon$ for all $(u,v) \in \tilde{E}$ where $\epsilon > 0$. 
In other words, each miner is equidistant from all other miners. 
We consider two cases depending on the value of $\epsilon$: (1) $0 < \epsilon \leq 1$ and (2) $1 < \epsilon < 2$. 
We do not consider the case where $\epsilon > 2$ as it is unlikely for a blockchain system to be configured such that the average interval between successive blocks is significantly less than the time it takes for a block to propagate to a majority fraction (e.g., 90 percent) of the network.  

\smallskip 
\noindent 
{\bf Case 1:} If $\epsilon \leq 1$ then when a block is mined in a round, it is received by all other miners in the network before the start of the next round. 
Therefore any block that is mined at the next round, is going to be mined on top of the block mined in the previous round. 
This implies there would be no forks in the blockchain at any time, and a unique longest chain would grow at each miner as rounds progress. 
Figure~\ref{fig:onecluster1} illustrates this process. 

\smallskip
\noindent 
{\bf Case 2:} Next, consider the case when $1 < \epsilon < 2$. 
In this case, a block mined during a round $r \in \mathbb{N}$ will not be received by any other miner in time when the next round $r + 1$ begins. 
Whereas all blocks mined up until round $r - 1$ will be received by all the miners before the start of round $r + 1$. 
If $r+1$ is an even number, then a block mined at round $r+1$ will be mined on top of block $r-2$. 
Whereas if $r+1$ is an odd number, then block $r+1$ will be mined on top of block $r-1$. 
This is because for an even $r+1$ blocks $r-2$ and $r-1$ are both at the same height in the blockchain. 
However, block $r-2$ reaches the miner of round $r+1$ before block $r-1$. 
Therefore block $r+1$ will be mined on top of block $r-2$. 
For an odd $r+1$, block $r-1$ will be at a height that is strictly greater than the height of block $r-2$. 
Hence block $r+1$ will be mined over block $r-1$ in this case. 
Figure~\ref{fig:onecluster2} illustrates how the blockchain evolves when $1 < \epsilon < 2$. 
Thus, at each height of the blockchain there is a fork and only half of all blocks mined are included in the longest chain. 

The two cases outlined above motivate the following theorem. 
We omit the proof as it is straightforward. 
\begin{thm} \label{thm:singcluster}
For a single cluster latency graph $L$ with inter-miner latency $\epsilon$, we have \\
(i) if $0 < \epsilon \leq 1$, then for any $v$ we have $\mathbb{E}[F_v] = 1/n$ and $\mathbb{E}[W_v] = 0$; \\
(ii) if $1 < \epsilon < 2$, then for any $v$ we have $\mathbb{E}[F_v] = 1/n$ and $\mathbb{E}[W_v] = 1/2$. 
\end{thm}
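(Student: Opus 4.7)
The plan is to exploit the permutation symmetry of the miner set to reduce both parts of the theorem to counting orphaned blocks, and then to derive the orphan structure directly from the block-propagation model. Because every miner has hash power $1/n$ and the single-cluster latency graph is invariant under relabelings of $V$, the joint distribution of the entire protocol execution is symmetric under permutations of the miners. Consequently $\mathbb{E}[F_v]$ and $\mathbb{E}[W_v]$ do not depend on $v$. Combined with the identity $\sum_{v\in V} F_v = 1$ (valid once the longest chain contains at least one block), this immediately yields $\mathbb{E}[F_v] = 1/n$ in both cases, and it reduces $\mathbb{E}[W_v]$ to computing the expected fraction of all mined blocks that are \emph{not} included in the longest chain.

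For case~(i), I would prove by induction on $r$ that at the start of round $r+1$ every miner sees the same linear chain of length $r$. Since $\epsilon \leq 1$, the round-$r$ block reaches every other miner by time $r + \epsilon \leq r+1$, so when round $r+1$ begins the round-$r$ tip is known everywhere and is the unique maximum-height block. The induction step is then automatic: the round-$(r+1)$ miner extends the common tip, and no fork is ever created. Every mined block lies on the longest chain, so $W_v = 0$ deterministically.

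For case~(ii), the core step is to describe the two-track structure that $1 < \epsilon < 2$ induces. At the start of round $r+1$ every miner has received all foreign blocks from rounds $1,\ldots,r-1$ (because $s + \epsilon < r+1$ for $s \leq r-1$) but has not yet received the round-$r$ block (because $r + \epsilon > r+1$). I would then establish by induction on $r$ the invariant that blocks $2k-1$ and $2k$ both sit at height $k$, the odd-indexed block of each pair arrived first at all miners, and the odd-indexed blocks form the unique longest chain. The induction step uses the earliest-received tiebreak of \S\ref{s: sysmodel}: when the round-$(r+1)$ miner sees two equal-height visible tips it extends the older (odd-indexed) one, yielding the next block of the longest chain, while its own block later becomes an orphan at the next height. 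Thus every even-indexed block is excluded, exactly half of all mined blocks are orphaned, and $\mathbb{E}[W_v] = 1/2$ follows from the symmetry reduction above.

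The main obstacle is that the case-(ii) induction assumes the round-$(r+1)$ miner is distinct from the round-$r$ miner; in the $1/n$ fraction of rounds where these coincide, the miner sees its own round-$r$ block locally and may extend it, producing a transient advantage that breaks the neat alternation. I would handle this either asymptotically, by showing that such rounds have density $O(1/n)$ and therefore do not perturb the time-averaged limits that define $F_v$ and $W_v$, or, for the exact equalities claimed in the theorem, by encoding the height gap between the two competing tips as a finite-state Markov chain and verifying that its stationary orphan rate is exactly $1/2$ regardless of how frequently such disruptions occur. The Markov-chain route is the cleanest way to obtain the stated expectations without approximation.
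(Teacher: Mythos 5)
Your argument follows the same route the paper intends: the paper omits a formal proof of Theorem~\ref{thm:singcluster} and relies on exactly the two-case discussion you reconstruct (no forks when $\epsilon \leq 1$; a persistent two-block fork at every height when $1 < \epsilon < 2$, with the odd-indexed track winning via the earliest-received tie-break). Your explicit permutation-symmetry reduction giving $\mathbb{E}[F_v] = 1/n$ is a clean addition the paper leaves implicit, and it is robust to all of the edge cases discussed below, so part of the theorem concerning $F_v$ is fully settled by your argument.

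Where you go beyond the paper is in flagging the self-mining coincidence, and there your proposed resolution has a flaw. The paper silently assumes the round-$(r+1)$ miner is distinct from recent miners; you correctly observe this fails with probability $1/n$ per round. But your claim that a Markov-chain analysis would certify a stationary orphan rate of exactly $1/2$ ``regardless of how frequently such disruptions occur'' does not hold up. Trace a double-mine at rounds $r$ and $r+1$: the miner sees its own block $r$ as the unique highest tip (one level above everything it has received from others), so block $r+1$ extends block $r$ instead of forking against it; both blocks land on the longest chain, the winning parity flips, and the alternation then resumes. Each such event permanently converts one would-be orphan into a longest-chain block, so the long-run orphan fraction is $1/2 - \Theta(1/n)$, not $1/2$. (The other coincidence---the round-$(r+1)$ miner having mined block $r-1$, which reverses the tie-break because a miner receives its own block with zero delay---alters the tree shape but not the orphan count.) The honest conclusion is your first route: the stated constant for $\mathbb{E}[W_v]$ holds exactly only under the paper's implicit idealization or in the $n \to \infty$ limit, with $O(1/n)$ corrections otherwise. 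This is as much a defect of the theorem statement as of your proof, but the Markov-chain route as you describe it would not rescue the exact equality.
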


The result in Theorem~\ref{thm:singcluster} also holds if the shortest path latencies are not all the same, but within the same range of values: if $0 < \delta_{(u,v)} < 1$ for all $u, v$ then claim (i) in the Theorem holds, and if $1 < \delta_{(u,v)} < 2$ for all $u, v$ then claim (ii) holds. 
E.g., a random topology is likely to produce shortest path latencies that are roughly similar for all miners, if the miners are all concentrated within a single geographic region (e.g., same continent or country).  
If the cost of mining blocks is negligible compared to rewards earned, Theorem~\ref{thm:singcluster} says that it does not matter what degree miners choose (so long as the network is connected). 
Whereas if the reward depends on the number of blocks mined, then a higher degree is desirable as it would result in a smaller latency value and minimize forking. 

\subsection{Two Clusters}
\label{sec: two clusters}

Next, we consider the case where there are two clusters of miners, where miners in each cluster have low-latency paths to each other but a high latency to miners of the other cluster. 
This can happen, e.g., if within a cluster miners are densely connected, whereas links between the clusters are sparse. 
To model this, we suppose the latency graph $L$ comprises of two sets $V_1$ and $V_2$ of miners with $|V_1| = pn$ and $|V_2| = (1-p)n$ where $0.5 < p < 1$. 
The latency between any two miners $u, v \in V_1$ is $\delta_{(u,v)} = \epsilon$ where $0 < \epsilon < 1$. 
Similarly, the latency between any two miners $u, v \in V_2$ is also $\delta_{(u,v)} = \epsilon$ where $0 < \epsilon < 1$. 
However, the latency between a miner $u \in V_1$ and a miner $v \in V_2$ is $\delta_{(u, v)} = \Delta$, where $1 < \Delta < 2$.
We assume $\Delta - 1 > \epsilon$, to model the case where the inter-cluster miner latencies are significantly larger than the intra-cluster miner latencies. 
This assumption implies if block $i$ is mined by cluster 1 and block $i+1$ is mined by cluster 2, then the $(i+1)$-th block is propagated to all miners in cluster 2 before they receive the $i$-th block (and vice-versa if the $i$-th block is mined by cluster 2 and the $(i+1)$-th block is mined by cluster 1). 

\begin{figure}[t]
    \centering
    \includegraphics[width=0.98\textwidth]{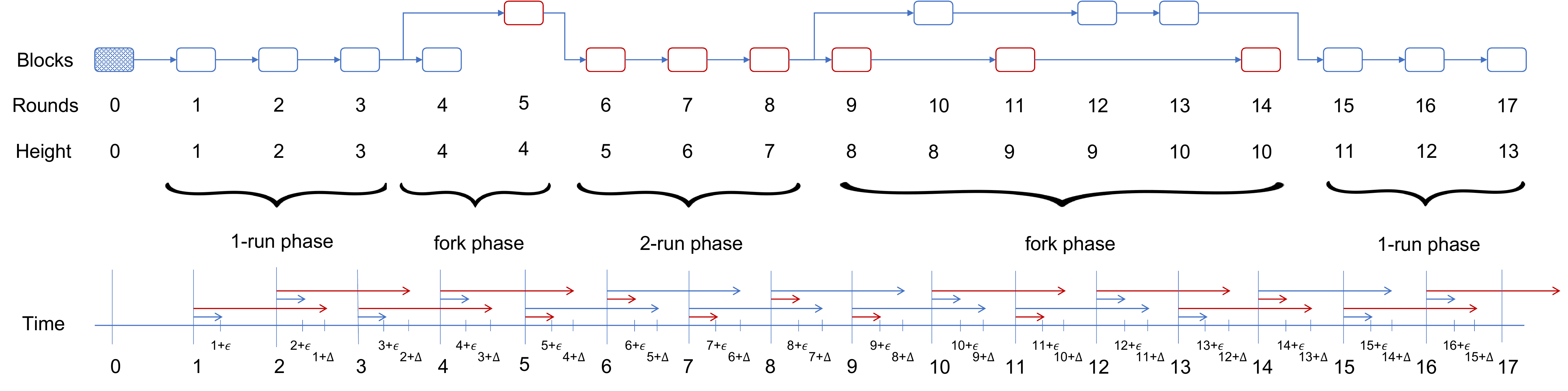}
    \caption{Evolution of the blockchain when there are two clusters. Blocks in blue are mined by cluster 1 miners, while blocks in red are mined by cluster 2 miners. The three types of phases  (1-run, 2-run and fork) that occur during the evolution are also marked. Time shows when a block mined reaches miners in cluster 1 (blue arrow) and miners in cluster 2 (red arrow).}
    \label{fig:twocluster_blockdiagram}
\end{figure}

We analyze this case by first observing that there are distinct {\em phases} that occur in any sample path of blockchain evolution. 
To illustrate this, consider the example shown in Figure~\ref{fig:twocluster_blockdiagram} where the sequence of clusters that mine blocks in each round are as follows: $1, 1, 1, 1, 2, 2, 2, 2, 2, 1, 2, 1, 1, 2, 1, 1, 1$, i.e., blocks 1--4 are mined by cluster 1, block 5--7 are mined by cluster 2 etc.
For this example, the blockchain evolves as shown in Figure~\ref{fig:twocluster_blockdiagram}. 
The different phases in this evolution are as follows. 

For the first three rounds, there are no forks. 
This is because the block mined in round 1 by a miner in cluster 1 reaches the miner of second block within $\epsilon$ time units, allowing the second miner to build its block on top of the first block. 
Similarly the third block is built on top of the second block.
Each of these three blocks is part of the longest chain. 
We call such a sequence of contiguous blocks that are all mined by miners within cluster 1 as a {\em 1-run} phase.
Similarly, if there is a continuous sequence of blocks that are all mined by miners of cluster 2 (e.g., blocks 6, 7, 8), we call those blocks as belonging to a {\em 2-run} phase. 
It can be seen that blocks mined during a 2-run phase also belong to the longest chain. 

Next, consider a sequence such as $2, 1, 2, 1, 1, 2$ where every two consecutive blocks is mined by an opposing pair of clusters such as $2, 1$ or $1, 2$. 
We call such a phase as a {\em fork phase}. 
From Figure~\ref{fig:twocluster_blockdiagram}, we see that during a fork phase the blockchain has two competing forks that grow in parallel. 
We formalize this observation in the following lemma. 
\begin{lem} \label{lem:twoclus}
For a two cluster network with a latency graph as discussed in this section (\S\ref{sec: two clusters}), during a fork phase there are two competing forks with miners in each cluster growing a distinct fork. 
\end{lem}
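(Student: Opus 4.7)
The plan is to prove the lemma by induction on the consecutive pairs of rounds that make up the fork phase. The invariant I would maintain is that after each complete pair, cluster 1's local longest chain ends in a block mined by a cluster-1 miner and cluster 2's in a block mined by a cluster-2 miner, with both tops at the same height. Once this invariant holds at every pair boundary, the two distinct forks in the lemma statement correspond exactly to the two tops on which the two clusters respectively mine next.

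As a preliminary step, I would record the knowledge available at each cluster at the start of round $t$. Since the intra-cluster shortest-path latency is $\epsilon < 1$, every cluster-$i$ block mined in a round strictly less than $t$ has reached cluster $i$ by time $t$. Since the inter-cluster latency satisfies $1 < \Delta < 2$, every cluster-$j$ block mined in a round at most $t-2$ has reached cluster $i$ by time $t$, while round $t-1$'s block, if mined by cluster $j$, has not yet arrived at cluster $i$. This two-round asymmetry between the clusters is the mechanism that generates and sustains the forks.

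Let $h_0$ be the common height at which the fork phase begins (justified by noting that a fork phase is typically preceded by a run phase long enough to synchronize both clusters onto a single chain), and index the pairs as $(r_0 + 2k, r_0 + 2k + 1)$ for $k = 0, 1, \ldots, m-1$; by the fork-phase definition each pair contains one cluster-1 and one cluster-2 block. The inductive claim is that immediately after the $k$-th pair, both clusters' local longest chains end at height $h_0 + k + 1$, with cluster 1's top mined by cluster 1 and cluster 2's top mined by cluster 2. For the inductive step at pair $k \geq 1$, say WLOG that cluster 1 mines at round $r_0 + 2k$. A short case analysis on the order of clusters within pair $k-1$ shows that in cluster 1's view either cluster 2's visible fork top is strictly lower than cluster 1's own fork top (no tie) or they are equal at height $h_0 + k$ (tie). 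In the tie case, cluster 1 received its own top at time $r_0 + 2k - 1 + \epsilon$ while cluster 2's visible top arrived at $r_0 + 2k - 2 + \Delta$; the assumption $\Delta - 1 > \epsilon$ yields $r_0 + 2k - 1 + \epsilon < r_0 + 2k - 2 + \Delta$, so cluster 1's own top wins the tie-break. Hence cluster 1 mines on its own fork, extending it to height $h_0 + k + 1$. At round $r_0 + 2k + 1$, cluster 2 still has not received cluster 1's new block (since $\Delta > 1$), and a symmetric tie-breaking argument shows cluster 2 also extends its own fork to $h_0 + k + 1$. The base case $k = 0$ follows immediately from the pre-fork-phase synchronization and the fact that the round-$r_0 + 1$ miner cannot see round $r_0$'s block.

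The main obstacle is the tie-breaking analysis: competing fork tops at exactly the same height arise repeatedly throughout the phase, and the whole argument hinges on the quantitative assumption $\Delta - 1 > \epsilon$ to guarantee that the own-cluster top arrives strictly earlier. A secondary subtlety is the base case, where one must briefly analyze which phase can immediately precede a fork phase so as to verify that both clusters really do share a common top at round $r_0$ from which the two sibling blocks then branch.
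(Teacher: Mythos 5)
Your proposal is correct and follows essentially the same route as the paper's proof: both argue by induction over the rounds of the fork phase, using the fact that $\Delta > 1$ hides the most recent opposing-cluster block and that $\Delta - 1 > \epsilon$ resolves the equal-height tie-break in favor of the own-cluster fork, and both handle the base case by examining the block mined just before the fork phase begins. Your version merely packages the induction per pair of rounds with an explicit invariant and makes the tie-breaking arithmetic quantitative, which the paper states more informally.
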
 
(Proof in Appendix~\ref{apx:lemmatwoclusproof}). 

A fork phase ends when there are two consecutive blocks mined by the same cluster. 
E.g., if $c_1, c_2, \ldots, c_{2i}$ is a sequence of clusters that mined blocks during a fork phase and if $c_{2i+1}, c_{2i+2}$ are the clusters that mined the two blocks following the fork phase, then if $c_{2i+1} = c_{2i+2}$ the fork phase ends. 
When a fork phase ends, among the two competing forks during the fork phase one of them ``wins out'' and become part of the longest chain. 
If $c_{2i+1} = c_{2i+2} = 1$ then the fork of cluster 1 wins out, whereas if $c_{2i+1} = c_{2i+2} = 2$ the the cluster 2's fork wins out (see Figure~\ref{fig:twocluster_blockdiagram}).
Thus when a fork wins out at the end of a fork phase, all blocks mined by the winning cluster become part of the longest chain.  
Note that is is possible for a new fork phase to begin starting with the second round after a fork phase ends (i.e., from the round of $c_{2i+2}$).  

To compute $\mathbb{E}[F_v]$ and $\mathbb{E}[W_v]$ for a miner $v$ in this two cluster setting, for any sequence of blocks mined $b_1, b_2, \ldots$ in each round starting from the genesis block, let $P_1, P_2, \ldots$ denote the sequence of phases that occur across the rounds. 
E.g., in Figure~\ref{fig:twocluster_blockdiagram}, $P_1$ is a 1-run phase that lasts during rounds $1, 2, 3$; 
$P_2$ is a fork phase that lasts during rounds $4, 5$; 
$P_3$ is a 2-run phase that lasts during rounds $6, 7, 8$; 
$P_4$ is a fork phase in rounds $9, 10, 11, 12, 13, 14$; 
$P_5$ is a 1-run phase during round $15, 16, 17$. 
For a miner $v \in V$, let $P^v_i$ be the number of blocks mined by $v$ during phase $P_i$ that is included in the longest chain. 
Similarly, let $P^c_i$ be the number of blocks mined by cluster $c \in \{1, 2\}$ during phase $P_i$ that is included in the longest chain. 
Supposing the network runs for $k$ phases. 
Then, the expected number of blocks $M_c$ mined by cluster $c$ that are in the longest chain over the $k$ phases is 
\begin{align}
M_c = \sum_{i=1}^k P^c_i \Rightarrow \mathbb{E}[M_c] = \sum_{i=1}^k \mathbb{E}[P^c_i]. 
\label{eq:phasesum}
\end{align}
Since the phases are identically distributed, it suffices to compute $\mathbb{E}[P^c_1]$ to evaluate $\mathbb{E}[M_c]$ in Equation~\eqref{eq:phasesum}.
Using $\mathbb{E}[M_c]$ we can compute $\mathbb{E}[F_v]$ as shown below. 
\begin{thm} 
\label{thm:twocluster}
For a two cluster latency graph $L$ discussed in this section (\S\ref{sec: two clusters}) with $|V_1| = p |V|$ and $|V_2| = (1-p)|V|$, 
we have $\mathbb{E}[F_v] = $ 
\begin{align}
\frac{1}{n} \left[ \frac{p}{1-p} + \frac{2p^2(1-p)}{(1-2p(1-p))^2} \right] / \left( \left[ \frac{p^2}{1-p} + \frac{2p^3(1-p)}{(1-2p(1-p))^2} \right] \right. \notag \\ 
+ \left. \left[ \frac{(1-p)^2}{p} + \frac{2(1-p)^3p}{(1-2p(1-p))^2} \right] \right)
\end{align}
for any $v \in V_1$. 
\end{thm}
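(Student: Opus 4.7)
The plan is to apply the renewal reward theorem to the phase decomposition described above, exploiting the iid nature of the mining cluster sequence $(c_r)_{r\geq 1}$ (cluster 1 with probability $p$, cluster 2 with probability $q := 1-p$). Set $\alpha := p^2+q^2 = 1 - 2p(1-p)$. Since the $c_r$'s are iid and each phase's type and length are determined by a local pattern of this sequence, the sequence of phases forms a regenerative process. Classifying each phase by its starting pair of cluster assignments yields three phase-type probabilities $\pi_{\text{1-run}} = p^2$, $\pi_{\text{2-run}} = q^2$, and $\pi_{\text{fork}} = 2pq$.

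Next, for each phase type I compute the expected number of cluster-$c$ blocks it contributes to the longest chain. In a 1-run, every block is cluster 1 and enters the longest chain; analysing the run length as a geometric random variable yields $\mathbb{E}[L_1] = 1/q$, and symmetrically $\mathbb{E}[L_2] = 1/p$ for a 2-run. The main work is the fork analysis: the key structural observation from \S\ref{sec: two clusters} is that a fork phase consists of consecutive ``different-cluster'' pairs of blocks, each contributing exactly one block to each of the two competing chains, so chain 1 and chain 2 grow in lockstep. Letting $K$ denote the number of different-cluster pairs before the terminating same-cluster pair, independence of pair outcomes gives $K$ geometric on $\{1,2,\ldots\}$ with $\mathbb{E}[K]=1/\alpha$; the terminating pair, independently of $K$, is $(1,1)$ with conditional probability $p^2/\alpha$ or $(2,2)$ with probability $q^2/\alpha$, and its cluster wins the fork. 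When cluster 1 wins, all $K$ of its fork blocks enter the longest chain, yielding
\[
\mathbb{E}[L_F^1] \;=\; \mathbb{E}[K]\cdot \frac{p^2}{\alpha} \;=\; \frac{p^2}{\alpha^2}, \qquad \mathbb{E}[L_F^2] \;=\; \frac{q^2}{\alpha^2}.
\]

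Aggregating via renewal reward, the expected per-phase contribution of each cluster to the longest chain is
\[
N_1 \;=\; p^2\cdot\tfrac{1}{q} + 2pq \cdot \tfrac{p^2}{\alpha^2} \;=\; p\!\left(\tfrac{p}{q} + \tfrac{2p^2 q}{\alpha^2}\right), \qquad N_2 \;=\; q\!\left(\tfrac{q}{p} + \tfrac{2q^2 p}{\alpha^2}\right).
\]
Since all $pn$ miners in $V_1$ are symmetric, a fixed $v\in V_1$ contributes $1/(pn)$ of cluster 1's total, and the definition of $F_v$ combined with the ergodic theorem yields $\mathbb{E}[F_v] = (N_1/(pn))/(N_1+N_2)$, which upon substitution of $N_1$ and $N_2$ is exactly the stated formula. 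The main obstacle I expect is the fork analysis: justifying the clean pair-wise structure (each different-cluster pair contributes one block to each competing chain, so the two chains stay tied until the resolving pair) and establishing that the fork's internal length is independent of the cluster of the resolving pair. These properties lean on the hypothesis $\Delta - 1 > \epsilon$, which guarantees that when consecutive blocks come from different clusters the second is built on the predecessor of the first rather than on the first itself; once this is in hand, the closed form $p^2/\alpha^2$ drops out and the remainder is algebraic simplification.
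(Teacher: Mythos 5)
Your proposal is correct and follows essentially the same route as the paper's proof: decompose the block sequence into 1-run, 2-run, and fork phases, compute the expected number of cluster-1 blocks entering the longest chain per phase (your conditional decomposition $\pi_{\text{type}}\cdot\mathbb{E}[\cdot\mid\text{type}]$ is just a reorganization of the paper's unconditional sums $\sum_i i\,p^{i+1}(1-p)$ and $\sum_i i\,(2p(1-p))^i p^2$, yielding the same $N_1$ and $N_2$), then divide by $pn$ via symmetry and take the ratio $N_1/(pn(N_1+N_2))$. The fork-phase structure you flag as the main obstacle is exactly the content of the paper's Lemma~\ref{lem:twoclus}, proved by induction using the assumption $\Delta - 1 > \epsilon$.
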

(Proof in Appendix~\ref{apx:thmtwoclusproof}). 

\begin{figure} \begin{subfigure}[t]{0.5\textwidth}
\centering
\includegraphics[width=0.75\textwidth]{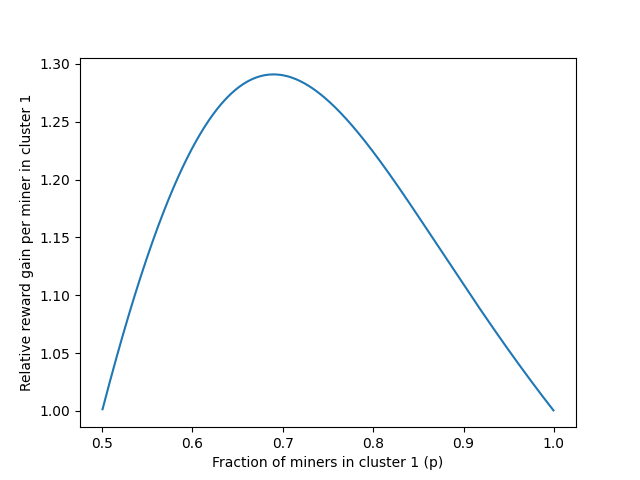} 
\caption{}
\label{fig:twoclustergain}
\end{subfigure}
\hfill 
\begin{subfigure}[t]{0.5\textwidth}
\centering
\includegraphics[width=0.75\textwidth]{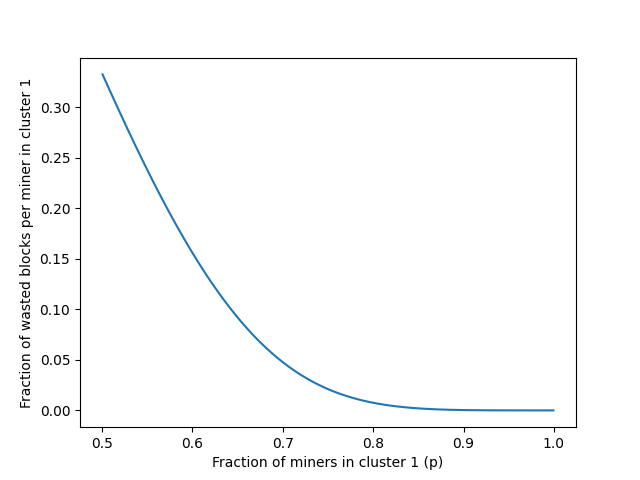} 
\caption{}
\label{fig:twoclusterwastage}
\end{subfigure}
\caption{(a) Relative gain over the fair value (i.e., $\mathbb{E}[F_v] / (1/n)$) of a miner $v$ in the dominant cluster with varying cluster size. (b) Percentage of blocks mined by $v$ that are not in the longest chain as a function of cluster size. }\label{fig:twoclusterplots} 
\end{figure}


Thus, the percentage of blocks mined by a miner in $V_1$ could be significantly different from the fair allocation of $1/n$ in this two cluster case. 
In Figure~\ref{fig:twoclustergain} we plot the ratio between $\mathbb{E}[F_v]$ as in Theorem~\ref{thm:twocluster} and the fair allocation of $1/n$ as a function of the cluster size parameter $p$. 
We see that the ratio is maximum at $1.29$ when the cluster size is roughly $69 \%$ of the total number of miners in the network. 

Theorem~\ref{thm:twocluster} implies that it is beneficial for miners to form connections with others on the network in such a way that there is a well-connected `dominant' cluster. 
In general, the optimal size and maximum possible reward for the dominant cluster depends on the underlying link latencies and connectivity graph among peers outside of the cluster.  
E.g., in Appendix~\ref{apx:optclusdelta} we show that as the latency $\Delta$ increases, the optimum size of the dominant cluster drops. 
The per-miner gain diminishes if the size of the dominant cluster either increases or decreases from this optimal value.
This is because with a small cluster size (i.e., $<$ optimal), miners in the dominant cluster have low-latency paths to only a small number of nodes resulting in poor reward improvements. 
With a large cluster size ($>$ optimal), despite a miner in the dominant cluster having low-latency paths to many nodes, it does not have adequate competitive edge as a large number of other miners also have low-latency paths to many nodes. 
At the optimum cluster size, a dominant cluster miner achieves the best balance between having low-latency paths to an adequate number of miners while ensuring the number of other well-connected miners is limited.
For miners that have been excluded from this dominant cluster, we show in Appendix~\ref{s:threeclusters} that being well-connected with each other is beneficial even if they are not able to achieve a low latency path to the dominant cluster.  
We now present the fraction of blocks mined by a miner that are included in the longest chain.  
\begin{thm}
\label{thm:fprimetwoclus}
For a two-cluster network with latency graph as in this section (\S\ref{sec: two clusters}), we have 
\begin{align}
\mathbb{E}[W_v] = \left( \frac{2(1-p)^3}{(1-2p(1-p))^2} \right) / \left( \frac{p}{1-p} + \frac{2p^2(1-p)}{(1-2p(1-p))^2} +  \frac{2(1-p)^3}{(1-2p(1-p))^2} \right).  
\end{align}
\end{thm}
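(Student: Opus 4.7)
The plan is to run the same phase-based machinery used in the proof of Theorem~\ref{thm:twocluster}, but counting cluster-1 blocks that end up \emph{off} the longest chain instead of \emph{on} it. The wastage term $2(1-p)^3/(1-2p(1-p))^2$ will arise from fork phases in which cluster 2 ends up winning, and the denominator will appear naturally from the identity $\mathbb{E}[T_1] = \mathbb{E}[M_1] + \mathbb{E}[L_1]$ (total cluster-1 blocks equals those in the longest chain plus those wasted).

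First, I would reduce to a cluster-level quantity by exchangeability. Every miner in $V_1$ has the same hash power $1/n$, and conditional on cluster 1 mining a given block, the miner is uniform over $V_1$. Hence
\begin{align*}
\mathbb{E}[W_v] \;=\; \frac{\mathbb{E}[L_1]}{\mathbb{E}[M_1] + \mathbb{E}[L_1]},
\end{align*}
where $M_1$ and $L_1$ are the total cluster-1 blocks respectively in and excluded from the longest chain over a long horizon. Dividing numerator and denominator by the number of phases preserves the ratio, so it suffices to work with the per-phase expectations of $M_1$ and $L_1$ under the i.i.d.~phase decomposition from \S\ref{sec: two clusters}.

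Next, the per-phase expectation of $M_1$ is already computed inside the proof of Theorem~\ref{thm:twocluster}. In the normalization used there it equals $\tfrac{p}{1-p} + \tfrac{2p^2(1-p)}{(1-2p(1-p))^2}$, which is exactly the sum of the first two terms of the denominator in the statement. Conceptually, the first summand collects cluster-1 blocks from 1-run phases and the second collects cluster-1 blocks from fork phases that cluster 1 wins. For the per-phase expectation of $L_1$, I would invoke Lemma~\ref{lem:twoclus}: a cluster-1 block is wasted iff it lies in a fork phase won by cluster 2. Conditioning on whether the fork began immediately after a 1-run or a 2-run (the two starting states of a fork in the phase-type Markov chain), I would enumerate the possible fork lengths with cluster 2 winning, multiply each by the deterministic count of cluster-1 blocks in a fork of that length, and sum the resulting geometric series in $p(1-p)$. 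Combining the two fork starting cases (which are related by the substitution $p \leftrightarrow 1-p$) and using $p^2 + (1-p)^2 = 1 - 2p(1-p)$ collapses the answer into the single compact expression $2(1-p)^3/(1-2p(1-p))^2$, which matches the numerator of the statement and the third denominator term.

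The main obstacle is ensuring that the per-phase values of $\mathbb{E}[M_1]$ and $\mathbb{E}[L_1]$ are carried through with a consistent normalization, so that their ratio is exactly the expected wasted fraction for cluster 1. The trickiest piece of bookkeeping is the decomposition by fork starting state: one must weight fork-after-1-run and fork-after-2-run by their stationary probabilities, handle the asymmetric win/loss probabilities in each, and sum two distinct geometric series before the $p^2 + (1-p)^2$ identity finally combines them into the clean factor of $2$ visible in the stated formula. Once this alignment is in place, the rest of the argument is a short algebraic substitution into the ratio from the first step.
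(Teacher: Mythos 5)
Your proposal is correct and follows essentially the same phase-based route as the paper: the paper computes the expected \emph{total} number of cluster-1 blocks per phase, whose fork-phase contribution $\sum_{i\ge 1} i\,(2p(1-p))^i (1-p)^2 = 2p(1-p)^3/(1-2p(1-p))^2$ for forks won by cluster 2 is exactly your wasted count $L_1$, and then reads off $\mathbb{E}[W_v]$ as $(\text{total}-\text{included})/\text{total}$, which is algebraically identical to your $L_1/(M_1+L_1)$. One minor correction to your narrative: conditioning on the fork's starting state is unnecessary, since a fork of length $2i$ contains exactly $i$ cluster-1 blocks and is lost with probability $(2p(1-p))^i(1-p)^2$ regardless of how it began, and the factor of $2$ in the final expression comes from differentiating the geometric series ($\sum_i i x^i = x/(1-x)^2$ with $x=2p(1-p)$), not from combining two cases via $p^2+(1-p)^2$.
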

(Proof in Appendix~\ref{apx:fprimetwoclusproof}).  

Figure~\ref{fig:twoclusterwastage} plots $\mathbb{E}[W_v]$ for a miner $v \in V_1$ as a function of $p$.
Unlike $\mathbb{E}[F_v]$, we see that $\mathbb{E}[W_v]$ is monotonically decreasing with $p$. 
When $p = 0.7$ the wastage fraction is already less than 5 \%. 
Thus, depending on the relative magnitude of the rewards obtained per mined block included in the longest chain and cost incurred per mined block, miners may be incentivized to form a cluster of an appropriate size. 

\smallskip 
\noindent 
{\bf Implications for cooperative and noncooperative topology designs.}
The noncooperative model can be understood as an instance of the two cluster scenario considered in this section, where the first cluster has only one node (i.e., $p=1/n$). 
If the node $v \in V_1$ is poorly connected to $V_2$ with $2 > \Delta > 1 + \epsilon$, Theorem~\ref{thm:twocluster} implies that $\mathbb{E}[F_v] = O(1/n^2)$. 
On the other hand, if $v \in V_1$ is well connected to $V_2$ with $\Delta < 1$, then $\mathbb{E}[F_v] = 1/n$ from Theorem~\ref{thm:singcluster}. 
Importantly, increasing $v$'s connectivity beyond this point to further reduce $\Delta$ does not increase $v$'s average reward. 
In the cooperative model, we have already discussed how an optimal cluster size leads to maximum rewards in Theorem~\ref{thm:twocluster}.  
How to achieve optimal connectivity in these models using distributed algorithms, with limited knowledge of the global network topology and latencies is an important question which we defer for future work.  
Adaptively trying different cluster sizes guided by measurement of rewards gained in real time may be a possible approach~\cite{mao2020perigee}, but could be challenging due to the sparsity and high-variance in the reward feedback.

\section{Evaluation}
\label{s: evaluation}
In this section we present experimental evaluation to highlight reward bias due to network topology and geographical location 
in real-world settings.
We also present candidate protocols that miners can adopt in order to alleviate the bias and increase their profits. 

\subsection{Simulation Setup}
\label{s: eval model}
We conduct our experiments on OMNeT++ 5.6.2, an event driven simulator~\cite{varga2008overview}, by implementing a detailed mining and block propagation model based on the Bitcoin protocol. 
There are 246 miners in our simulation, with each miner mining blocks, broadcasting messages and maintaining a local replica of the blockchain in a completely decentralized fashion. 
The model we have used in the simulations builds upon the model described in \S\ref{s: sysmodel}. 

\smallskip
\noindent
\textbf{Mining.}
We apply Bitcoin's mining and broadcasting strategy in our simulation.
At the start of the simulation, all the nodes mine the block on top of the genesis block.
To simulate mining, each miner has an independent random timer that expires according to an exponential distribution with a default mean of 15 seconds. 
We use the same rate for all miners, to model uniform hash power across all miners. 
When a miner's timer expires, it generates a block on top of the block with the greatest height on its blockchain, and broadcasts it immediately to its connected neighbors. 
When a miner is waiting for its timer to expire, if it receives a new block that extends its current longest chain, it updates it local blockchain and resets the timer to mine on top of the new received block.

\smallskip
\noindent
\textbf{Block broadcast.}
Miners receive and generate messages of different types, each of which causes the state of the miner to be updated as follows. 
\begin{itemize}
\item[*] A block message has an identifier and a reference to the identifier of the block's parent. 
We do not include any transactions in a block message; nor do we model bandwidth limitations in the network. 
When a miner receives a block message, it first validates the block by verifying it points to a valid parent block in its blockchain.
If the received block is valid and extends the longest chain in the blockchain by one, the miner adds the block in to its blockchain. 
It then resets its exponential timer to mine a new block over the latest received block.
We set a duration of 1 ms at each miner for this validation operation. 
Upon validation, the miner immediately sends out the block message to all of its neighbors except the one it received it from. 
\item[*] A miner may also receive a block message, whose parent block is unavailable at the miner's local blockchain replica. 
In this case, the miner sends a `require' message back to the sender of the block requesting to also send the parent block. 
The received block is meanwhile stored within an orphan block list at the miner. 
We set a duration of 1 ms at each miner for this operation as before. 
When a miner receives a require message, it sends back the requested block as a `response' message.
In case the response block message received to the require message is also such that its parent block is unavailable, a second require message is sent. 
This process continues until all blocks from the genesis to the first received block are available. 
\item[*] If a miner receives a block that is not part of the longest chain, it adds the block in to the blockchain, and does not relay the block to any of the neighbors. 
\end{itemize}


\smallskip 
\noindent
\textbf{Reward.}
Miners earn rewards by mining blocks that get included in the longest chain.
We measure reward by computing the percentage fraction of blocks mined by a miner that is included in the longest chain, i.e., the $F_v$ values for each miner $v$ as defined in Equation~\eqref{eq: Fv define cont}. 
While it is likely that the blockchain replicas at different miners are not completely consistent with each other, it is typically the case that the longest chain except for the last few blocks is consistent across miners. 
So we discard the last 100 blocks in the longest chain, and collect the percentage of valid blocks among the remaining blocks in the longest chain for each miner.
Each experiment is run for a sufficient duration that more than 100,000 blocks are present in the longest chain at the end of the simulation. 
This ensures there are at least a 100 blocks mined per miner within the longest chain. 
For each network topology, we run 10 independent simulations to gather confidence in the reported results. 

\smallskip
\noindent
{\bf Network model.} 
We consider 246 miners distributed across cities around the world. 
There are 94 miners from Europe, 83 from North America, 37 from Asia, 12 from South America, 11 from Africa and 9 from Australia.
This geographical distribution of miners is representative of the Bitcoin network's miner distribution~\cite{bitnodes} that has a majority of nodes at Europe and North America (Appendix~\ref{apx:simdetails}). 
The propagation delay $l_{(u,v)}$ of sending a block from any miner $u$ to miner $v$ is set using latencies of sending pings from city $u$ to city $v$ as measured in the global ping statistics dataset collected on July 19th \& 20th, 2020 from WonderNetwork~\cite{wondernetwork}.
We take half of the average measured ping round-trip times as the propagation delay between the two miners. 
The median link propagation delay in the dataset is 69 ms. 
We do not model bandwidth at the miners, and assume block sizes to be small relative to the bandwidth. 
We have set the delay for validating blocks (i.e., $c_v$ for miner $v$ in \S\ref{s:netmodel}) to be a low value of 1 ms, as in reality miners (e.g., in Bitcoin) have a significant compute power available to them~\cite{gervais2015tampering}.
The primary broadcast delay bottleneck in our experiments is therefore the 
WAN propagation delay. 
We will open-source all code and datasets used in the experiments, and include a weblink for them in the final version of the paper. 

\subsection{Results}

In the following sections (\S\ref{s:rewbiarantop}, \S\ref{s:topincrew}, \S\ref{s:allrewbias}), 
we consider various candidate topologies for the network and measure the effect they have on the rewards of miners. 
The purpose of these experiments is to solely understand these effects---we do not consider how the topologies may be constructed (desirably in a distributed fashion) by the miners or the effect of a dynamically changing topology and leave these issues for future work.  

\subsubsection{Reward-Bias on a Random Topology} 
\label{s:rewbiarantop}

\begin{figure}[t]
    \centering
    \includegraphics[width=\textwidth]{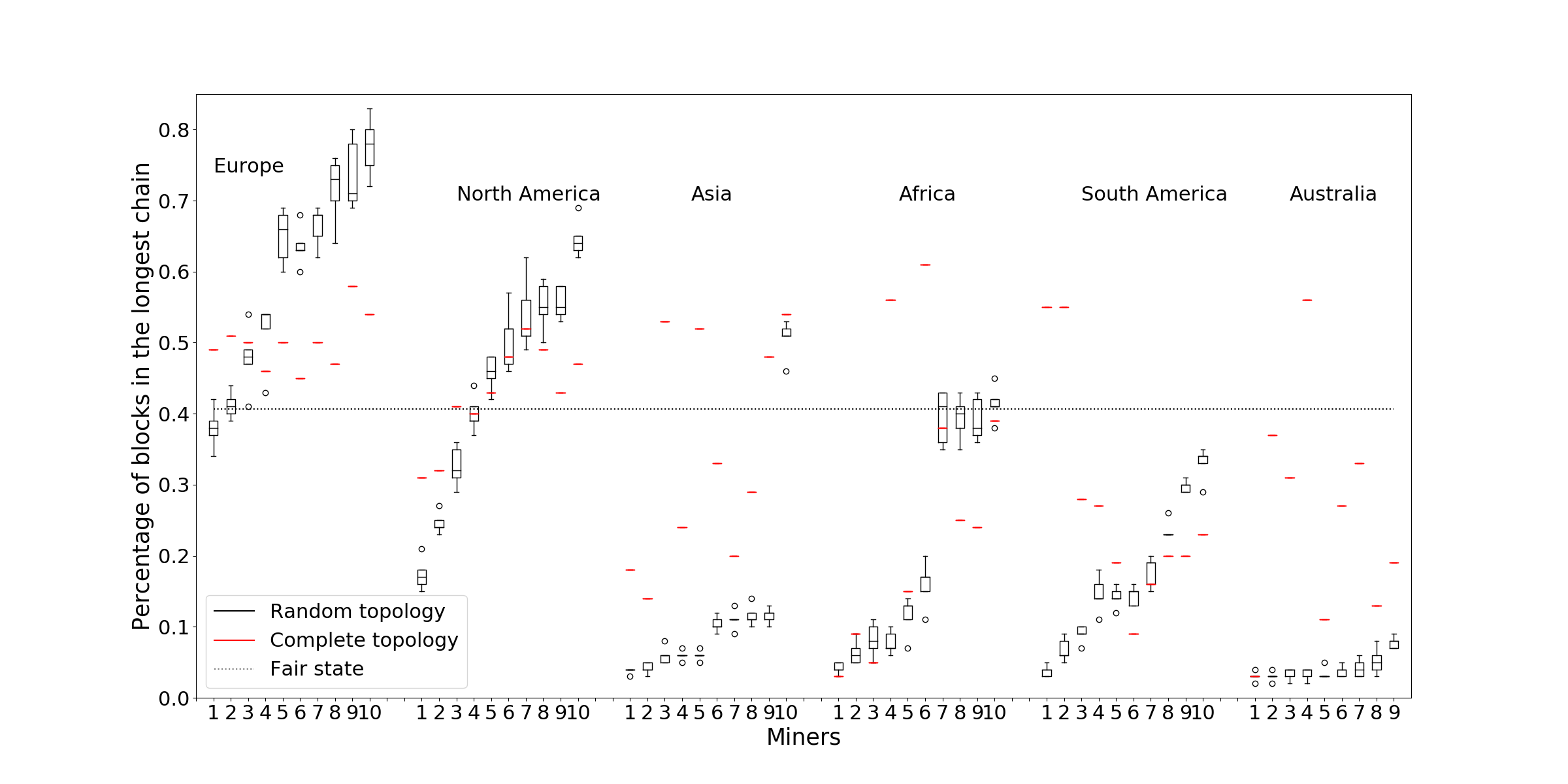}
    \caption{Fraction of blocks in longest chain as mined by miners in different continents under random and complete graph topologies. The confidence intervals are omitted for the complete graph results for clarity. }
    \label{fig:randomandcom}
\end{figure}

On a world-wide network, the distribution of where miners are located implicitly creates a clustering in the network, even if the topology is chosen in a geography-independent way (e.g., random). 
So before we provide any optimized topology designs, we first investigate how much benefit/hurt miners receive owing to their geography. 
For this experiment, we consider the 246 miner network outlined in \S\ref{s: eval model}, with link latencies following the WonderNetwork ping dataset.  
We consider a random topology and the complete graph topology.
Similar to the topology followed by Bitcoin network today, the random topology randomly picks a fixed number of neighbors for each node.
Due to the smaller size of our network, we set a (out-)degree of 6 for each miner rather than 8 as followed by Bitcoin~\cite{miller2015discovering}.
Each miner reaches out to 6 random miners and builds connections to them; it also accepts any incoming connection requests from other miners.
The complete graph topology connects each miner to the remaining 245 miners. 
Thus each miner has a direct connection to every other miner.


Figure~\ref{fig:randomandcom} shows the results, where we have plotted the percentage of blocks in the longest chain mined by miners in each of the 6 continents.
For brevity, we have randomly chosen 10 miners from each continent and have included only their results in the figure. 
We see that the Europe and North America are the dominant clusters.
In the 246 node network, there is a fair value of around 0.4\% ($=100/246$) with uniform hash power.
However, most of the European and North American miners get a greater percentage than the fair value while miners in other continents get below their fair share in both topologies.
In the complete graph topology, Europe and North America get slightly closer to the fair state, but there is still a large difference between them and other regions.

We can explain the difference between Europe, North America and the other regions using the theoretical results of Section~\ref{s: analysis}.
As a large cluster with low link delay within the cluster, Europe and North America can get a higher acceptance rate in their mined blocks, as it corresponds to there being a single dominant cluster in the network. 
Since Europe has 94 miners and North America has 83 miners, together they have 72\% of all miners in the network, which provides strong percentage gains for every miner in the cluster.

\subsubsection{Topologies to Increase Rewards}
\label{s:topincrew}

\begin{figure} \begin{subfigure}[t]{0.4\textwidth}
\centering
\includegraphics[width=\textwidth]{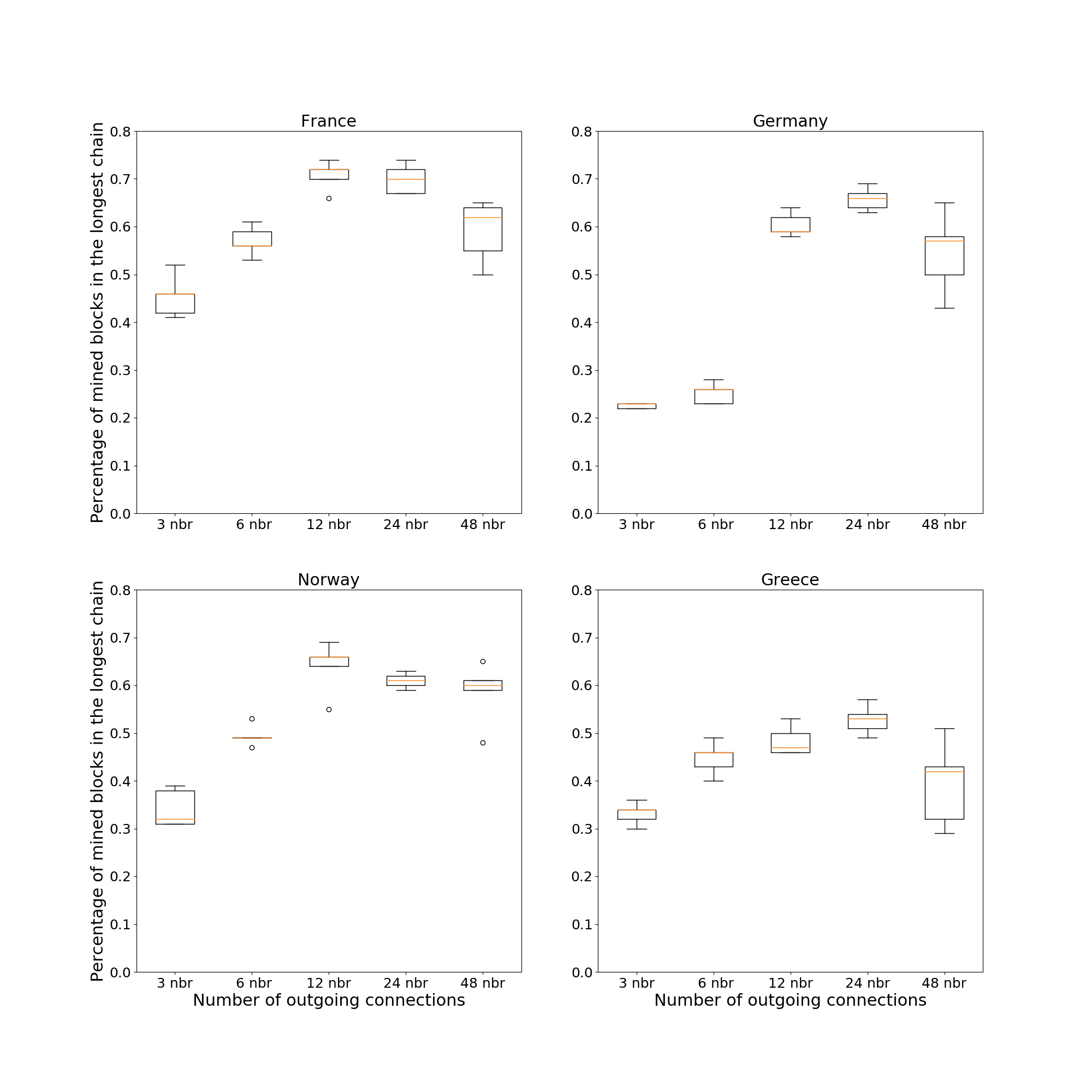} 
\caption{}
\label{fig:EuropeIncrease}
\end{subfigure}
\hfill 
\begin{subfigure}[t]{0.6\textwidth}
\centering
\includegraphics[width=\textwidth]{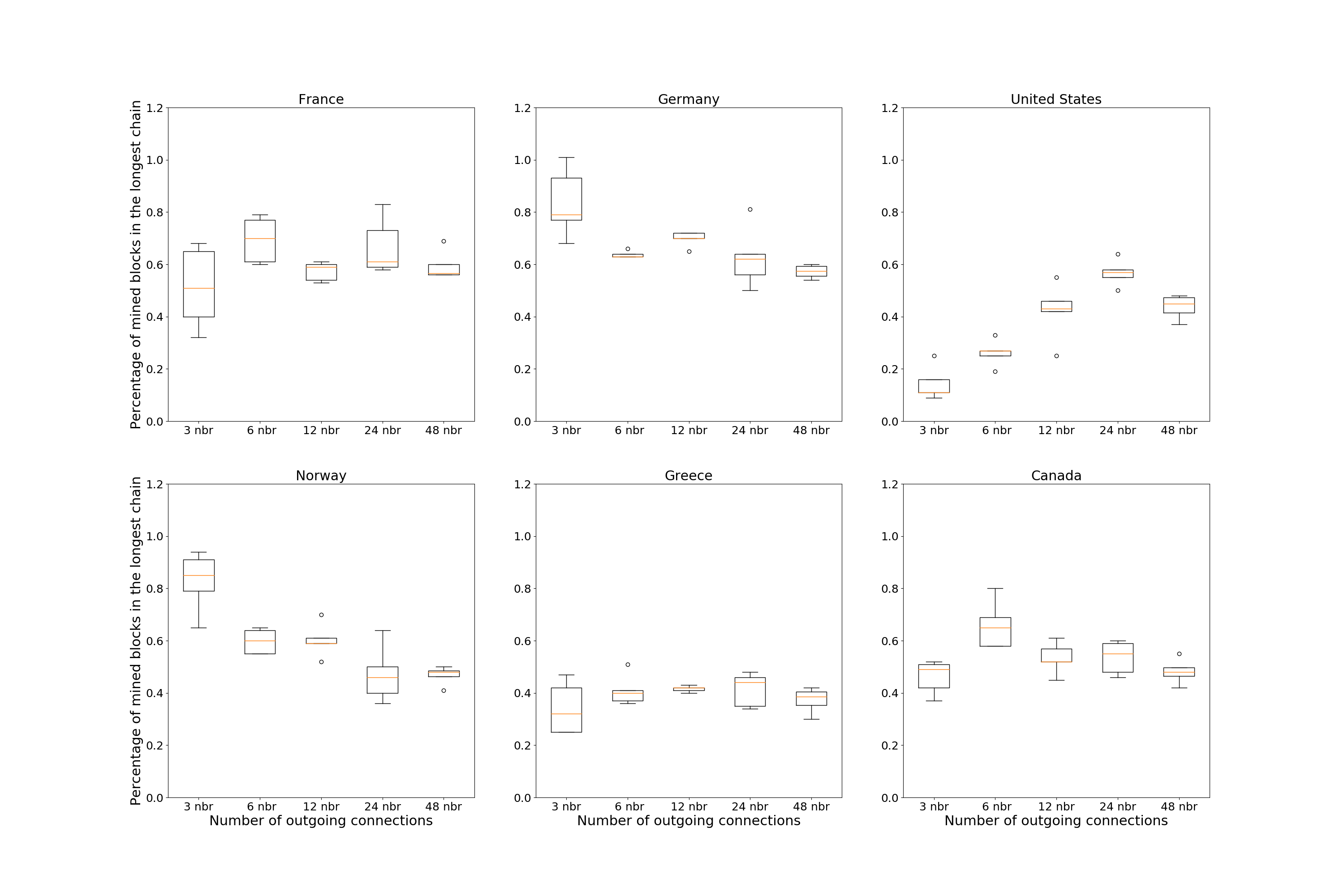} 
\caption{}
\label{fig:centrlComplete}
\end{subfigure}
\caption{(a) Increasing the degree of miners in Europe hurt them at high degree (b) Increasing the degree of miners in Europe and North America. }
\label{fig:EuNA} 
\end{figure}

Geography-induced clustering under a random topology provides a natural advantage to the European and North American nodes.
We verify whether increasing the number of neighbors further benefits miners in Europe and North America. 
By keeping the degree at other miners fixed at 6, we increase the number of random connections at European miners to vary as 3, 6, 12, 24 and 48. 
Figure~\ref{fig:EuropeIncrease} plots the performance for 4 randomly chosen miners in Europe. 
While the performance increases initially with increasing degree, after about a 12 neighbors any further increase in number of neighbors hurts the miners. 

Since there are 2 central regions, we also consider a policy where we increase the neighbor degree at both European and North American miners, 
while keeping the degree of other miners fixed. 
Figure~\ref{fig:centrlComplete} plots performance of 4 randomly picked nodes from Europe as in Figure~\ref{fig:EuropeIncrease} and 2 randomly picked nodes from North America.
Here too, the performance tends to 
 drop with more neighbors.

Next, we consider topologies in which a subset of miners are tightly interconnected to form a dominant cluster. 
We first implement a clustering policy in which nodes are randomly split into two groups, containing 70\% and 30\% of miners respectively. 
The first group is the dominant cluster, while the second group is the non-dominant cluster.
We connect each miner to every other miner within its own cluster, 
whereas we leave only 20 connections overall between the 2 clusters.
Figure~\ref{fig:70cluster} shows the results, where we only plot a random subset of miners from each group for brevity as before.
We observe that miners from the large cluster get a greater percentage of blocks because of being part of the larger cluster.
Whereas, even European and North American miners have a hard time achieving the fair state if they are in the smaller cluster. 

We also consider another clustering policy where we take Europe and North America together as the large cluster and the remaining continents as the small cluster.
As before, within a cluster we connect each miner to all other miners in the cluster, and leave  only 20 connections between clusters.
In Figure~\ref{fig:EUNACluster}, the percentage distributions are consistently above the fair state for miners in the dominant cluster compared to Figure~\ref{fig:70cluster}. 


\subsubsection{Alleviating Reward Bias}
\label{s:allrewbias}

\begin{figure} \begin{subfigure}[t]{0.5\textwidth}
\centering
\includegraphics[width=\textwidth]{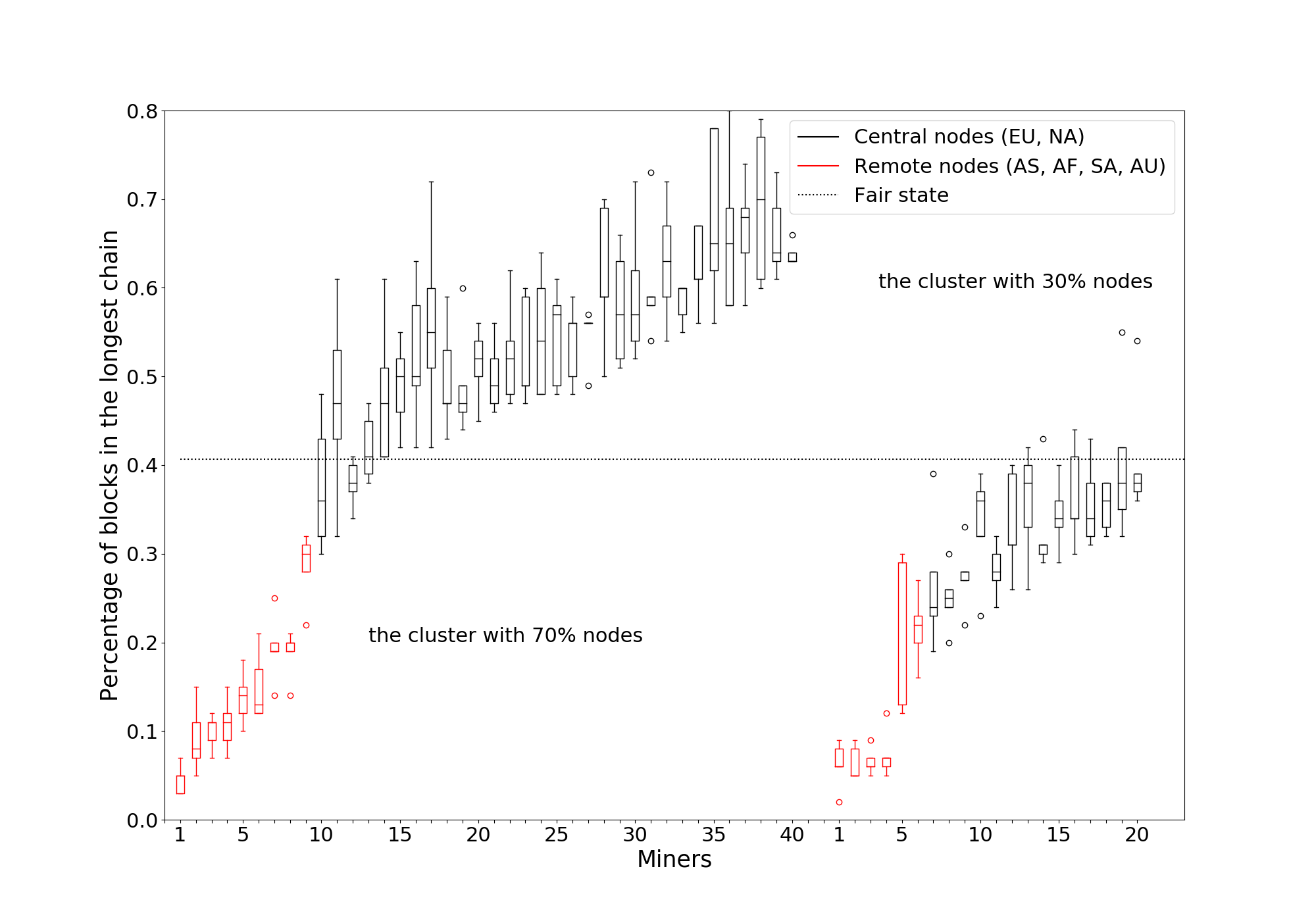} 
\caption{}
\label{fig:70cluster}
\end{subfigure}
\hfill 
\begin{subfigure}[t]{0.5\textwidth}
\centering
\includegraphics[width=\textwidth]{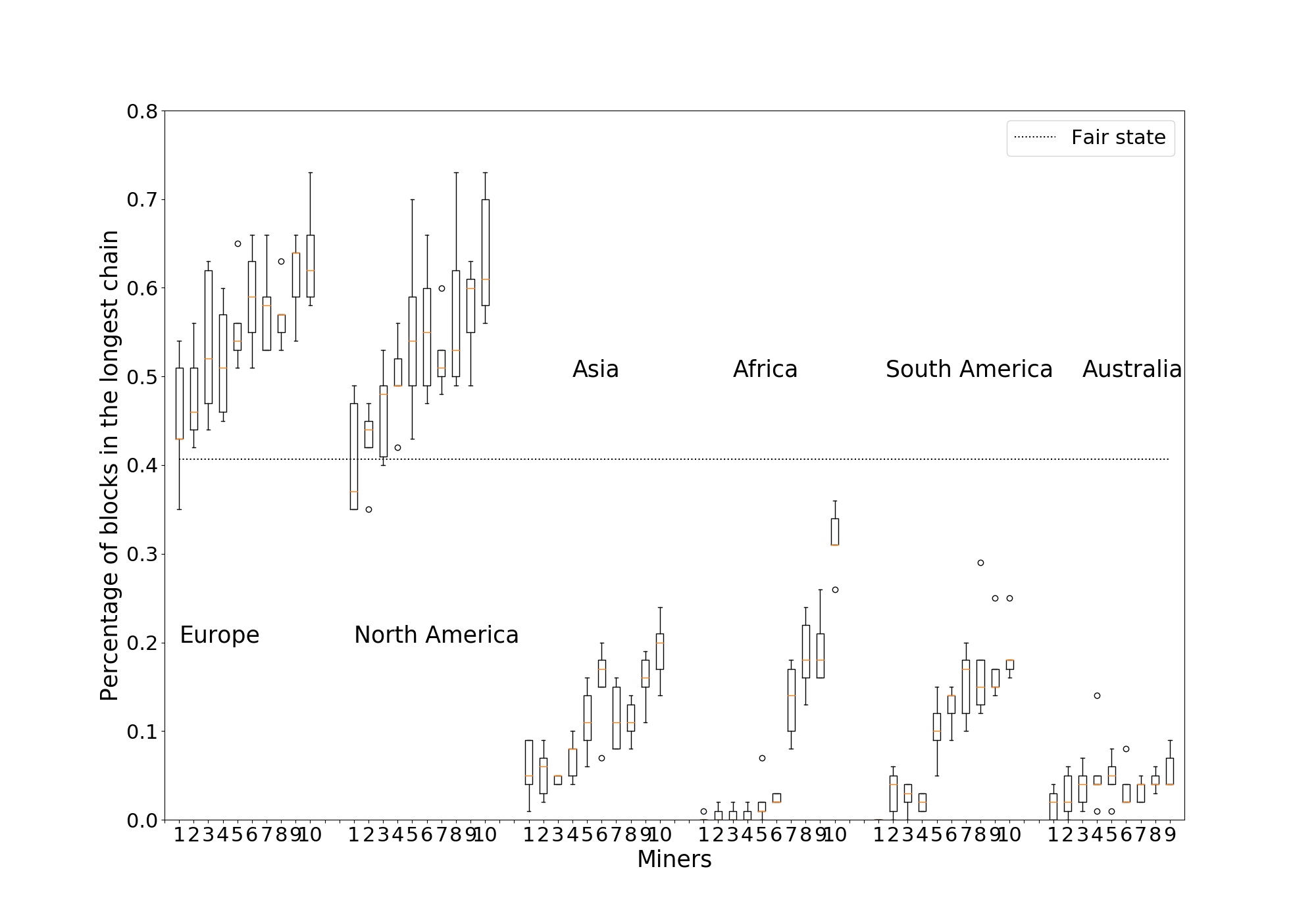} 
\caption{}
\label{fig:EUNACluster}
\end{subfigure}
\caption{Percentage of blocks in longest chain mined by miners in the dominant and non-dominant clusters under (a) random clustering policy, (b) a geography-based clustering policy. EU, NA, AS, AF, SA, AU denote Europe, North America, Asia, Africa, South America and Australia respectively.}
\label{fig:ArtClustering} 
\end{figure}

In light of the reward biases due to clustering that has been highlighted thus far, a natural question that arises is what can miners negatively affected by the clustering do to lessen the impact of clustering on their payoff. 
From the results of \S\ref{s: analysis}, two potential solutions arise. 
\begin{itemize}
\item First, if the miner has knowledge about the existence of a dominant cluster in the network, it should attempt to connect to as many miners within the cluster as possible and become a part of it. 
\item When this is not possible, if the miner has knowledge about other miners that are not included in the dominant cluster it should attempt to connect to as many of those miners as possible to form as large a non-dominant cluster in the network as possible.  
\end{itemize} 
To test these solutions, we conduct experiments where we adjust the neighbors of miners in non-central locations, namely Asia, Africa, Australia and South America to see if their reward improves.
We first focus on miners only in Asia. 
Due to the relatively small number of Asian miners, and relatively large propagation delays to other miners, most Asian nodes get lower than their fair share in the random topology (Figure~\ref{fig:randomandcom}).
By keeping the rest of the network topology fixed (each miner makes 6 random outgoing connections), we adjust each Asian miner to choose 3, 6, 12, 24 and 48 neighbors across different experimental runs. 
Figure~\ref{fig:AsiaIncrease} plots the percentage rewards obtained by 4 randomly selected miners in Asia under this experiment.
With increasing number of neighbor connections, all 4 Asian nodes get a higher reward percentage.
Miners in Russia and Israel are closer to Europe so they perform better than the other 2 nodes at first. 
However, they have a lesser relative performance improvement with 48 outgoing neighbor connections compared with the miners in Japan and India.

Next, we consider the policy where 
we take miners in continents except Europe and North America to form a single tightly-connected non-dominant cluster. 
With the neighbors for Europe and North American miners unchanged, each miner in other continents choose to connect to every other miner in the other continents.  
Figure~\ref{fig:remoteComplete} plots the performance for 10 randomly picked miners from each continent.
Asian and Australian miners get significant benefit compared to the random topology.
African nodes are only slightly affected, with performance rather similar to that of the random topology.
South American miners exhibit two distinct trends: miners in the north of the continent (closer to North America) which perform well under a random topology see some loss; 
while miners in the south of the continent get a performance increase.
European and North American miners lose a bit in their reward percentages.
Thus, the policy of tightly interconnecting miners in non-central locations benefits most of these miners. 
Miners in centrally located clusters or close to the central cluster bear the corresponding loss.

\begin{figure} \begin{subfigure}[t]{0.4\textwidth}
\centering
\includegraphics[width=\textwidth]{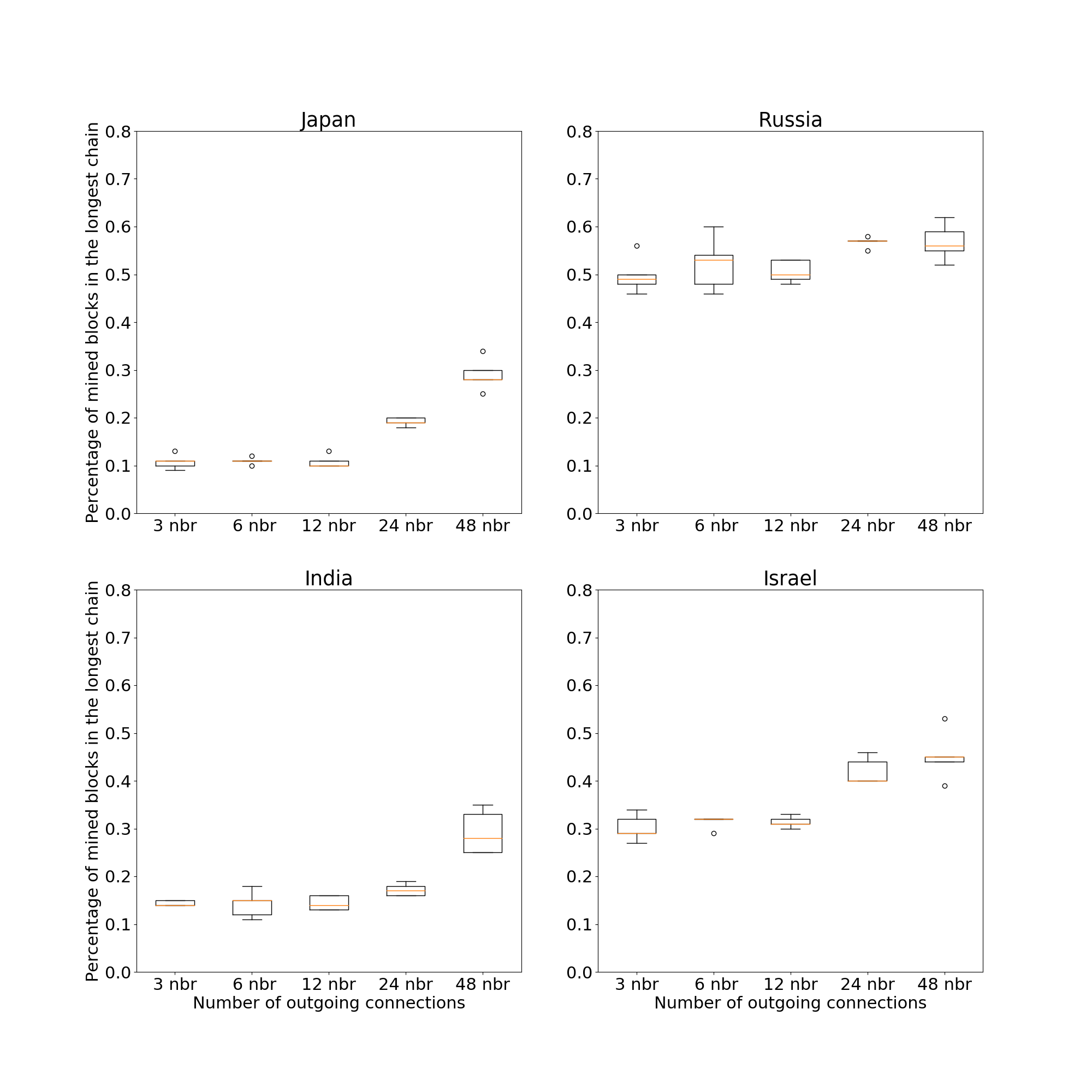} 
\caption{}
\label{fig:AsiaIncrease}
\end{subfigure}
\hfill 
\begin{subfigure}[t]{0.6\textwidth}
\centering
\includegraphics[width=\textwidth]{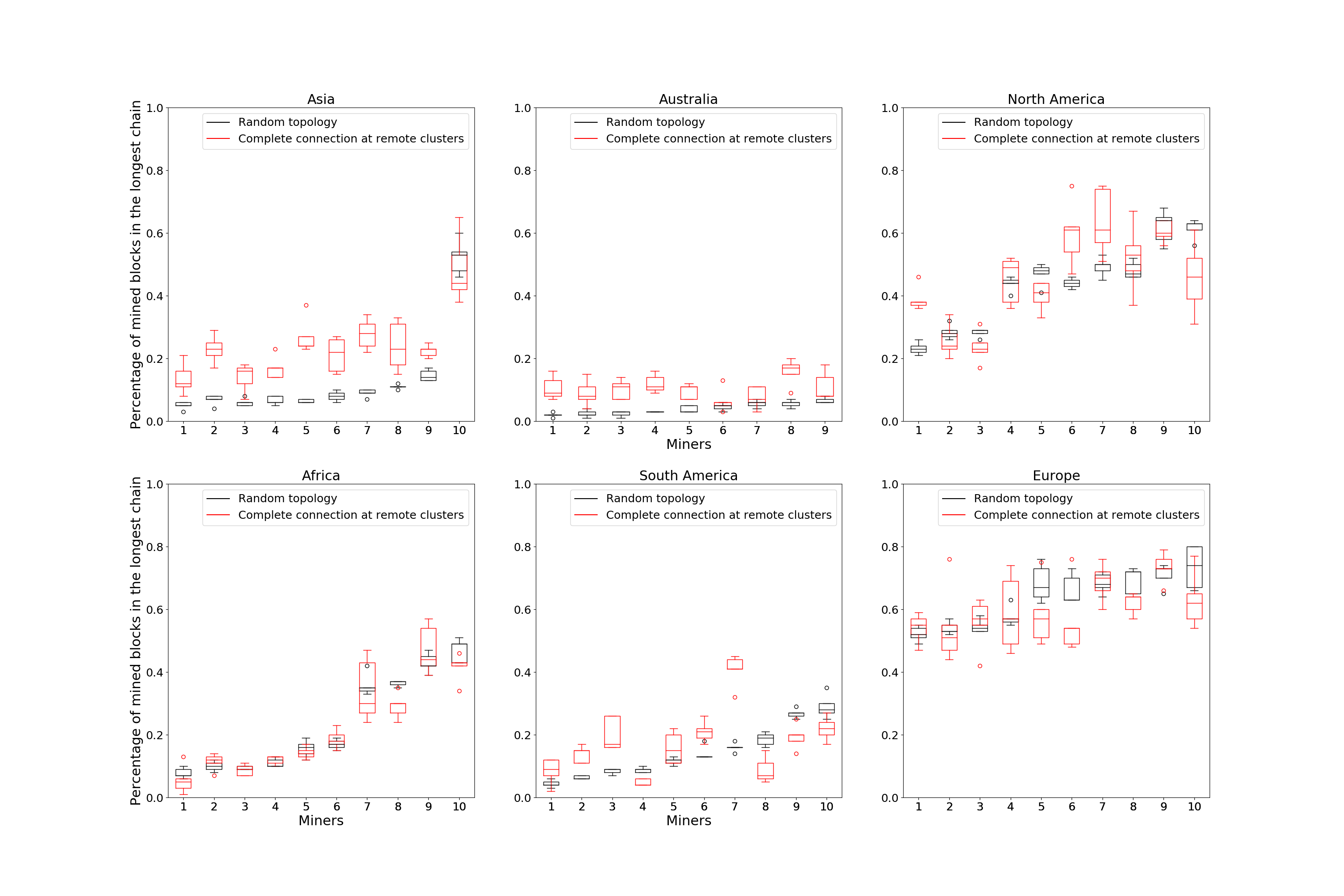} 
\caption{}
\label{fig:remoteComplete}
\end{subfigure}
\caption{(a) Increasing the degree of miners in Asia strictly benefits them. (b) Tightly interconnecting miners in non-central regions strictly benefits most of the non-central miners. }
\label{fig:Asia} 
\end{figure}

Based on the results in Figure~\ref{fig:Asia} and Figure~\ref{fig:EuNA}, the ideal strategy for each miner varies. 
Centrally located miners need to limit their connections, as more neighbors hurt them.
Miners in non-central regions need to expand their neighborhood size, as the more the better.
There is a conflict between these ideal strategies, because when non-central miners increase their number of connections to central miners, it inadvertently also increases the number of connections from the central miners which hurts the central miners.
Therefore central miners are incentivized to reject these extra connections.
The best strategy then for the non-central miners is to connect to the miners that are willing to accept their connection requests.
Ultimately, this creates a situation where non-central miners have a large number of interconnection links among themselves, while central nodes enforce a strict control of their connections.


\section{Related Work}

The effect of network delay on mining rewards in blockchains has been studied in prior works~\cite{shahsavari2019theoretical}.
E.g., Xiao et al.~\cite{xiao2020modeling} show that a slower network leads to high forking rates, and well-connected miners benefit more under heterogeneous network connectivity. 
Cao et al.~\cite{cao2021characterizing} show that with a limited degree bound, high hash power mining pools are incentivized to connect with other hash power mining pools to reduce block propagation delay and increase their mining rewards. 
Gencer et al.~\cite{gencer2018decentralization} highlight that the top miners are generally more successful in including blocks in the main chain, on Bitcoin and Ethereum networks. 
In Putri et al.~\cite{putri2018effect}, the authors look at the effect of network latency on miners executing a selfish-mining attack~\cite{eyal2014majority}.  
These works do not consider the effect increasing connectivity of a miner has on the connectivity of other miners. 


Enabling faster broadcasting of blocks in the network, even if it does not necessarily result in strict reward improvements, allows blocks to be mined faster thereby increasing the transaction confirmation throughput of the system.   
Thus a number of prior works have proposed techniques to accelerate block dissemination. 
Relay networks such as Falcon~\cite{falcon}, Fibre~\cite{fibre}, bloXroute~\cite{klarman2018bloxroute} use a network of dedicated servers to transport blocks quickly between distant geographical regions instead of relying on p2p gossip. 
Kadcast ~\cite{rohrer2019kadcast} proposes a structured p2p overlay with UDP and forward error correction to achieve a more efficient broadcast compared to Bitcoin's random topology with TCP links. 
Perigee ~\cite{mao2020perigee} presents an algorithm to explore and exploit neighbors to search for the best set of neighbors to connect to, for optimizing broadcasting delay. 
Other innovations, e.g., Bitcoin's compact block relay~\cite{corallo2017compact} also help to reduce broadcast time by reducing the size of each block. 
Nagayama et al.~\cite{nagayama2020identifying} study how progress in Internet speeds in recent years have led to reductions in broadcast time in Bitcoin. 

Another line of work considers how miners can deviate from a prescribed mining protocol to selfishly garner benefit~\cite{sapirshtein2016optimal,bag2016bitcoin}. 
Eyal et al.~\cite{eyal2014majority} present a selfish mining attack, wherein a pool of colluding miners withhold mined blocks and reveal them at a later point in time, thus deceiving other miners into performing the futile task of mining blocks over a forked chain. 
Empirical analysis of the Bitcoin blockchain shows a large frequency of short time intervals between consecutive blocks, implying selfish mining may be happening in practice~\cite{neudecker2019short}. 
Lewenberg et al.~\cite{lewenberg2015bitcoin} uses game theoretic tools to analyze how members of a mining pool may share rewards between them. 
They show under certain cases participants are incentivized to keep switching between pools. 

\section{Conclusion}
\label{s: conclusion}

We have considered the problem of topology design for maximizing miner rewards in wide-area PoW blockchain networks.  
A miner may decrease its latency to other miners by carefully choosing its neighbors on the overlay, and/or by increasing the number of neighbors. 
Contrary to general understanding that achieving a lower network latency to other miners is strictly beneficial to miners, we have shown the existence of an ``optimal'' amount of connectivity where a miner receives the most benefit in both noncooperative and cooperative settings. 
Designing fully decentralized algorithms that can achieve this optimal clustering is an interesting direction for future work. 
Generalizing our analysis to arbitrary network topologies is another interesting direction. 
Our result exposes an inherent weakness in PoW blockchains, where miners can always organize themselves in order to selfishly benefit by denying the fair share of rewards to miners of the non-dominant cluster. 
Beyond network-level strategies, it would be useful to propose protocol changes (e.g., at the consensus layer) that incentivize miners to obtain their fair share of rewards.   



\bibliography{paper}

\appendix
\section{Proof of Lemma~\ref{lem:twoclus}}
\label{apx:lemmatwoclusproof}
\begin{proof}
Consider a fork phase sequence $c_1, c_2, \ldots, c_{2i}$, where $c_j$ for  $j\in \{1, \ldots, 2i\}$ denotes the cluster that mined the $j$-th block in the fork phase sequence. 
Let $b_1, b_2, \ldots, b_{2i}$ be the sequence of blocks that are mined during this phase.
Supposing the fork phase started right at round 1, meaning the first block $b_1$ mined by $c_1$ is mined on top of the genesis block. 
At the beginning of the second round, none of the miners in cluster $c_2$ would have received $b_1$.
Therefore $b_2$ is mined on top of the genesis block as well. 
Thus there are two forks at the end of two rounds, one due to each cluster. 
Similarly during the third round, if $c_3 = c_1$ then block $b_3$ is going to be mined on top of $b_1$, whereas if $c_3 = c_2$ then block $b_3$ is going to be mined on top of $b_2$.
This is because if $c_3 = c_1$ then the miner of $b_3$ would have received $b_1$ but not $b_2$ at the beginning of the third round. 
Since the fork ending with $b_1$ is the longest fork known to the miner of $b_3$, it mines the block $b_3$ on top of $b_1$. 
Analogously, if $c_3 = c_2$ then the miner of $b_3$ receives $b_2$ before $b_1$. 
Since both forks are at equal height but $b_2$ was received earlier than $b_1$, the miner of block $b_3$ mines $b_3$ on top of $b_2$.  
We can similarly argue that if $c_4 = c_2$ then $b_4$ is mined on top of $b_2$, and if $c_4 = c_1$ then $b_4$ is mined on top of $b_1$. 
Continuing this reasoning, by induction we can see that throughout the duration of the fork phase there are two forks (one per cluster) that increases in height in parallel. 

If the fork phase did not start right at round 1, let $c_0$ be the cluster that mined the block $b_0$ in the round before the fork phase started and let $b_{-1}$ be the block mined in the round before $b_0$ was mined (it is possible $b_{-1}$ is the genesis block). 
Clearly $c_0 \neq c_1$, since this would mean $c_0$ is also in the fork phase. 
Therefore $c_0 = c_1$ and block $b_1$ is mined on top of block $b_0$. 
Moreover, the height of $b_{-1}$ is strictly less than the height of $b_0$. 
For otherwise $b_{-1}, b_0$ would also be in the fork phase. 
Therefore $b_2$ is also mined on top of $b_0$. 
From here on, we continue the same argument as in the case where the fork phase starts at round 1 to show the lemma. 
\end{proof}

\section{Proof of Theorem~\ref{thm:twocluster}}
\label{apx:thmtwoclusproof}

\begin{proof}
We consider three possible cases: \\
{\em (i) $P_1$ is a 1-run phase.} Supposing the phase lasts for $i \geq 1$ rounds. 
The probability of this happening is $p^{i+1} (1-p)$, since the phase ends only when the $(i+1)$-th block is mined by cluster 1 and the $(i+2)$-th block is mined by cluster 2.
All blocks mined by cluster 1 during this phase belong to the longest chain. \\
{\em (ii) $P_1$ is a 2-run phase.} Again suppose the phase lasts for $i \geq 1$ rounds. 
The probability of this event happening is $(1-p)^{i+1} p$ by the same argument as before. 
None of the blocks mined during this phase belong to cluster 1. \\
{\em (iii) $P_1$ is a fork phase.} Now, suppose the phase lasts for $2i$ rounds for $i\geq 0$. 
Further suppose that cluster 1 wins this phase, i.e., blocks mined by cluster 1 become part of the longest chain after the phase ends. 
The probability of this occurring is $(2p(1-p))^i p^2$. 
Here $2p(1-p)$ is the probability that in every consecutive pair of rounds, one of the blocks is mined by one cluster and the other block is mined by the other cluster. 
$p^2$ is the probability that at the end of the fork, two blocks are mined by cluster 1 which ensures that cluster 1's fork wins out after the phase ends. 
In this case all $i$ blocks mined by cluster 1 during the phase become part of the longest chain. 
Similarly $(2p(1-p))^i (1-p)^2$ is the probability that cluster 2 wins out at the end of the phase. 
However, in this case none of the blocks mined by cluster 1 become part of the longest chain. 

From the above three cases, we have
\begin{align}
\mathbb{E}[P_1^1] &= \sum_{i=1}^\infty \left( i p^{i+1} (1-p) + i (2p(1-p))^i p^2 \right) \notag \\
&= \frac{p^2}{1-p} + \frac{2p^3(1-p)}{(1-2p(1-p))^2} \notag \\
\Rightarrow \mathbb{E}[M_1] &= k \left[ \frac{p^2}{1-p} + \frac{2p^3(1-p)}{(1-2p(1-p))^2} \right].   
\end{align}
We can similarly compute the expected number of blocks mined by cluster 2 as 
\begin{align}
\mathbb{E}[M_2] = k \left[ \frac{(1-p)^2}{p} + \frac{2(1-p)^3p}{(1-2p(1-p))^2} \right]. 
\end{align}
Since each miner in cluster 1 is equally likely to mine a block, by symmetry we have 
\begin{align}
\mathbb{E}[M_v] = \frac{k}{n} \left[ \frac{p}{1-p} + \frac{2p^2(1-p)}{(1-2p(1-p))^2} \right] \quad \forall v \in V_1. \label{eq:expfracpermin2}
\end{align}
In the limit of the number of phases approaching infinity, using concentration bounds we can show that the expected percentage of blocks mined by a miner in cluster 1 can be given as 
\begin{align}
\mathbb{E}[F_v] = \frac{\mathbb{E}[M_v]}{\mathbb{E}[M_1] + \mathbb{E}[M_2]} 
\end{align}
where $\mathbb{E}[M_1], \mathbb{E}[M_2], \mathbb{E}[M_v]$ are as above. 
\end{proof}

\section{Proof of Theorem~\ref{thm:fprimetwoclus}}
\label{apx:fprimetwoclusproof}
\begin{proof}
The proof follows a similar argument as in the proof of Theorem~\ref{thm:twocluster}.  
As before let $P_1, P_2, \ldots, P_k$ be a sequence of $k$-phases starting from the genesis block.  
The expected number of blocks mined by a miner $v\in V_1$ that are included within the longest chain during these $k$ phases is given by Equation~\eqref{eq:expfracpermin2}. 
Let $\tilde{P}_1^c$ be the expected total number of blocks mined by cluster $c$ in phase 1.
During a 1-run phase all blocks mined belong to cluster 1. 
During a 2-run phase none of the blocks mined belong to cluster 1. 
During a fork phase, half of all blocks mined belong to cluster 1. 
Therefore, 
\begin{align}
\mathbb{E}[\tilde{P}_1^1] &= \sum_{i=1}^\infty \left( i p^{i+1} (1-p) + i (2p(1-p))^i p^2 + i (2p(1-p))^i (1-p)^2 \right) \\
&= \frac{p^2}{1-p} + \frac{2p^3(1-p)}{(1-2p(1-p))^2} +  \frac{2p(1-p)^3}{(1-2p(1-p))^2}. 
\end{align} 
Hence the expected total number of blocks mined by a miner $v \in V_1$ across $k$ phases is 
\begin{align}
\mathbb{E}[\tilde{M}_v] = \frac{k}{n} \left[ \frac{p}{1-p} + \frac{2p^2(1-p)}{(1-2p(1-p))^2} +  \frac{2(1-p)^3}{(1-2p(1-p))^2} \right]. \label{eq:twoclustotnum}
\end{align}
Combining Equation~\eqref{eq:twoclustotnum} with Equation~\eqref{eq:expfracpermin2} and using the definition of $W_v$ in Equation~\eqref{eq:Wdefn}, the Theorem follows. 
\end{proof}

\section{Optimal Cluster Size Depends on Underlying Network Latencies}
\label{apx:optclusdelta}

\begin{figure}[t]
    \centering
    \includegraphics[width=0.6\textwidth]{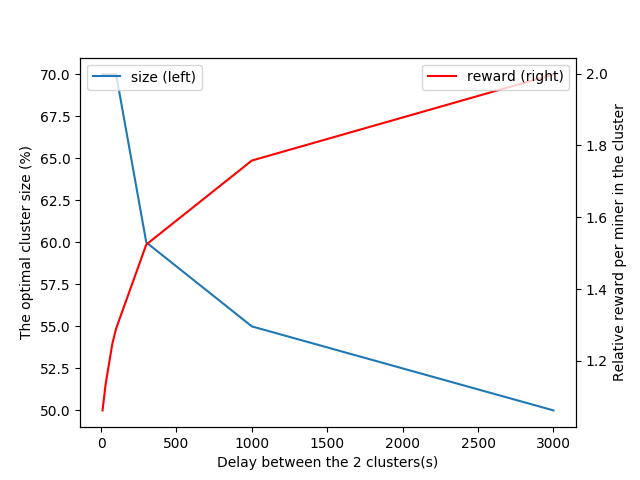}
    \caption{Optimum cluster size and relative gain in reward for the dominant cluster as the inter-cluster latency $\Delta$ varies. }
    \label{fig:2clusterDelay}
\end{figure}
To understand the impact of network latencies on optimum cluster size, we consider the two cluster setting of \S\ref{sec: two clusters} and simulate it on OMNet++ according to the model discussed in \S\ref{s: sysmodel}. 
The network has a total of 60 nodes and the intra-cluster latency $\epsilon = 1$. 
The mining rate for each miner is chosen to be $h_v = 1/2400$, so that one block is mined on average once every 40 time units. 
We vary the inter-cluster latency $\Delta$ as $10, 30, 75, 100, 300, 1000, 3000$. 
For each candidate $\Delta$, we consider different possible sizes for the dominant cluster ($55, 60, 65, 70, 75, 80, 85, 90$) and evaluate the reward gains for each size. 
Figure~\ref{fig:2clusterDelay} plots the optimal observed cluster size and the corresponding reward for the dominant cluster miners, as $\Delta$ varies. 
We observe that as $\Delta$ increases the size of the optimal cluster drops, from about 70\% when $\Delta = 10$ to about 50\% when $\Delta=3000$. 
The percentage increase in reward increases from 6\% at $\Delta=10$ to about 100\% when $\Delta=3000$. 

\section{Three Clusters}

\begin{figure}[t]
    \centering
    \includegraphics[width=0.98\textwidth]{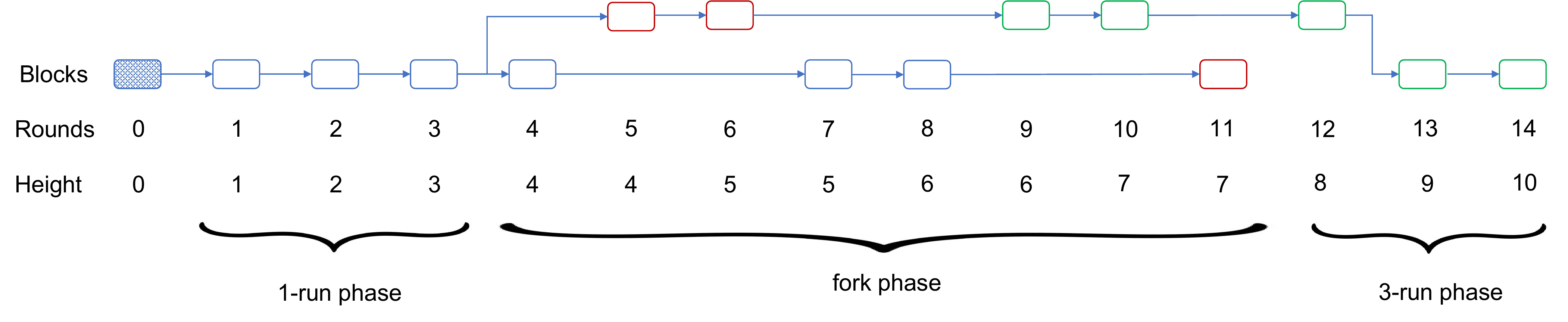}
    \caption{Evolution of the blockchain when there are three clusters. Blocks mined by each cluster are denoted using a different color. The figure also illustrates the different phases that occur during the evolution. }
    \label{fig:threecluster_blockdiagram}
\end{figure}

Next, we consider the case when there are three clusters in the network $V_1, V_2, V_3$ with $|V_1| = p_1 n, |V_2| = p_2 n, |V_3| = p_3 n$ for $p_1 \geq 0, p_2 \geq 0, p_3 \geq 0$ with $p_1 + p_2 + p_3 = 1$. 
We assume the latency between two miners within the same cluster is $\epsilon$ with $0 < \epsilon < 1$ while the latency between two miners in different clusters is $\Delta$ with $1 + \epsilon < \Delta < 2$.  
Consider an example where the cluster to mine a block during each round is given by the sequence $1, 1, 1, 1, 2, 2, 1, 1, 3, 3, 2, 3, 3, 3$ as shown in Figure~\ref{fig:threecluster_blockdiagram}. 
As in the two cluster case, we observe distinct phases that occur in the sequence. 
During rounds 1 to 3 we have a 1-run phase, where blocks are mined by cluster 1 and all blocks belong to the longest chain. 
During rounds 4 to 11 we have a fork phase, where there are two competing chains that are growing in parallel. 
From round 12 onwards we have a 3-run phase where blocks are mined by cluster 2. 

Notice that the number of forks occurring during a fork phase is still two, as in the two cluster case. 
This is due to our assumption that $\Delta < 2$, which guarantees each miner to hear about at least the last but one block at the beginning of a round. 
While the 1-run, 2-run and 3-run phases are analogous to the two cluster case, the fork phase in this case behaves differently than before. 
Specifically, when a fork phase ends it is no longer true that a single cluster emerges as the winner of the phase. 
Indeed, in Figure~\ref{fig:threecluster_blockdiagram} we see that the winning fork includes blocks mined by all three clusters. 
This makes the analysis more involved compared to the two cluster case. 
\begin{thm}
\label{thm:threeclusterf}
For a three cluster network with latency graph as described in this section, we have 
\begin{align}
\mathbb{E}[F_v] = \frac{\mathbb{E}[M_1]}{p_1(\mathbb{E}[M_1] + \mathbb{E}[M_2] + \mathbb{E}[M_3])},
\end{align}
for any $v \in V_1$ where $\mathbb{E}[M_1], \mathbb{E}[M_2], \mathbb{E}[M_3]$ are as given in Equation~\eqref{eq:threeclumexp} in the proof below.
\end{thm}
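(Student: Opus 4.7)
The plan is to adapt the phase-decomposition argument from the proof of Theorem~\ref{thm:twocluster}. Each sample path of block mining decomposes into an i.i.d.\ sequence of phases of four possible types: for $c \in \{1, 2, 3\}$, a $c$-run phase (a maximal consecutive sequence of blocks mined by cluster $c$), and a fork phase (a sequence in which every two consecutive blocks are mined by different clusters, terminating when two consecutive blocks share a cluster). The assumption $1+\epsilon < \Delta < 2$ ensures, exactly as in the two-cluster case, that during a fork phase two parallel forks grow with each new block extending the same-parity fork from two rounds earlier; when the fork phase ends with two same-cluster blocks, the fork containing the first of those two becomes strictly longer and its constituent blocks enter the longest chain.

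I would compute $\mathbb{E}[M_c]$ for each cluster $c$ by summing contributions across phase types, using that phases are identically distributed. For $c$-run phases, the analysis is a direct three-cluster analogue of the two-cluster argument, with $p$ replaced by $p_c$ and $1-p$ replaced by $1-p_c$; a $c$-run of length $i$ occurs with a probability that factors into powers of $p_c$ and $1-p_c$ and sums over $i \geq 1$ to a closed-form rational expression. The more delicate term is the fork-phase contribution. Letting $X_c$ denote the number of cluster-$c$ blocks on the winning fork of a fork phase, I would model the cluster sequence within a fork phase as a Markov chain on $\{1,2,3\}$ with transition probabilities $P_{c,c'} = p_{c'}/(1-p_c)$ for $c' \neq c$, and attach a geometric distribution to the fork length whose termination probability depends on the current cluster through $\sum_c p_c^2$. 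By decomposing the fork phase by its length $2i$ and by the parity of the winning fork, $\mathbb{E}[X_c]$ reduces to a double sum over walks on this chain that can be evaluated in closed form using generating-function arguments on the geometric length distribution, or equivalently the eigenstructure of $P$.

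The main obstacle is precisely this fork-phase computation. In the two-cluster setting each fork carries blocks from exactly one cluster, so the winning fork's contribution to $\mathbb{E}[M_c]$ is immediate; in the three-cluster setting each fork can carry blocks from any of the three clusters, with only the constraint that each same-parity transition avoids the cluster of the adjacent opposite-parity block. I expect the resulting expression (which would be recorded as Equation~\eqref{eq:threeclumexp}) to be a ratio of polynomials in $(p_1,p_2,p_3)$ whose numerator and denominator both involve the quantity $1 - \sum_c p_c^2$ from the fork-continuation probability, and which should degenerate to the two-cluster formula when any one $p_c \to 0$. With $\mathbb{E}[M_1], \mathbb{E}[M_2], \mathbb{E}[M_3]$ so computed, the stated formula follows by symmetry across the $p_1 n$ identically-distributed miners of $V_1$: each contributes $\mathbb{E}[M_1]/(p_1 n)$ blocks in expectation to the longest chain, and dividing by the total expected number of longest-chain blocks $\mathbb{E}[M_1] + \mathbb{E}[M_2] + \mathbb{E}[M_3]$ yields $\mathbb{E}[F_v]$ in the form claimed.
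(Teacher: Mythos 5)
Your phase decomposition, the run-phase contributions, and the final symmetry step (dividing the per-cluster expectation among the $p_1 n$ identically distributed miners and normalizing by the total longest-chain count) all match the paper's argument. The gap is in the fork phase, which you correctly flag as the crux but then analyze with a structural picture that does not survive the move from two to three clusters. You assert that each new block extends the ``same-parity fork from two rounds earlier'' and later condition on ``the parity of the winning fork,'' as if the winning chain were one of two parity classes. That is the two-cluster picture, where parity coincides with cluster identity. With three clusters it fails: the first block of pair $j+1$ extends whichever block of pair $j$ its miner received first, namely the block of pair $j$ mined by its own cluster if one exists, and otherwise the earlier (first) block of that pair; the second block of pair $j+1$ follows the same rule, so both blocks of pair $j+1$ may extend the \emph{same} block of pair $j$. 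Consequently the block of pair $j$ that lands on the longest chain is determined by a \emph{backward} recursion from the phase-ending pair $c_{2l+1}=c_{2l+2}$: the winner of pair $j$ is the block whose cluster matches the cluster of the winner of pair $j+1$, and if neither matches, the first block of pair $j$ wins by the earliest-received tie-break. The winning chain therefore weaves between first and second positions across pairs and can contain blocks of all three clusters, so ``the parity of the winning fork'' is not a well-defined conditioning event and summing cluster-$c$ blocks along a parity class computes the wrong quantity.

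For the same reason your proposed Markov chain on $\{1,2,3\}$ with transitions $p_{c'}/(1-p_c)$ models the wrong object: within a fork phase only the two blocks inside each non-overlapping pair are constrained to differ (the second block of pair $j$ and the first block of pair $j+1$ may well share a cluster, as in the paper's example sequence $1,2,2,1,1,3,3,2$), so the correct unit is an i.i.d.\ sequence of ordered pairs $(c,c')$ with $c\neq c'$, each occurring with probability $p_c p_{c'}$ and the phase terminating with probability $\sum_c p_c^2$ per pair. The paper handles the coupling induced by the backward winner rule by introducing, for each ordered pair $(i,j)$ and each designated winner $w\in\{i,j\}$, the conditional expectation $\alpha^w_{ij}$ of cluster-1 blocks on the winning chain in the remainder of the phase, and solving the resulting closed system of twelve linear equations; $\mathbb{E}[M_1]$ is then assembled from this solution plus the run-phase term $p_1^2/(1-p_1)$. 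Your generating-function or eigenstructure idea could in principle be substituted for solving that linear system, but only after you replace the parity-based winner rule with the pairwise backward recursion; as written, the proposal would produce the wrong distribution over winning chains and hence the wrong Equation~\eqref{eq:threeclumexp}.
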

\begin{proof}
As in the two cluster case, let $P_1, P_2, \ldots, P_k$ be $k$ phases starting from the genesis block. 
It suffices to analyze the expected number of blocks mined by cluster 1 during the first phase as the phases are identically distributed.  
Let $M_1$ be the number of blocks mined by cluster 1 that is included in the longest chain in phase $P_1$. 
If $P_1$ is a 1-run phase, then all blocks mined during the phase belong the longest chain. 
The contribution of this term in the overall expression for $\mathbb{E}[M_1]$ is 
\begin{align}
\sum_{i=1}^\infty p_1^i p_1 (1-p_1) i = \frac{p_1^2}{1-p_1}, \label{eq: threeclusterinterm}
\end{align} 
by the same reasoning as in the two cluster case. 

Next, suppose $P_1$ is a fork phase. 
Let the blocks mined during this phase be from clusters $c_1, c_2, \ldots, c_{2l-1}, c_{2l}$ where $c_i$ denotes the cluster that mined block $i$ in the phase.
Let $c_{2l+1}, c_{2l+2}$ be the clusters that mined blocks $2l+1, 2l+2$ after the fork phase ends. 
We must have $c_{2l+1} = c_{2l+2}$. 
We observe that the winner among blocks $2l-1, 2l$ is determined by $c_{2l+1}, c_{2l+2}$. 
If $c_{2l-1} = c_{2l+1}$, then $c_{2l-1}$ is the winner; else if $c_{2l} = c_{2l+1}$ then $c_{2l}$ is the winner; if $c_{2l-1} \neq c_{2l} \neq c_{2l+1}$ then $c_{2l-1}$ is the winner. 
Similarly, the winning cluster during round $2l-3, 2l-2$ is determined by the winning cluster during rounds $2l-1, 2l$. 
Let $c^*_{2l-1, 2l}$ be the winning cluster during rounds $2l-1, 2l$. 
If $c_{2l-3} = c^*_{2l-1, 2l}$ then $c_{2l-3}$ is the winner; else if $c_{2l-2} = c^*_{2l-1,2l}$ then $c_{2l-2}$ is the winner; if $c_{2l-3} \neq c_{2l-2} \neq c^*_{2l-1,2l}$ then $c_{2l-3}$ is the winner.
By continuing this process we can determine the winning cluster during each consecutive pair of rounds in the fork phase.  

\begin{figure}[t]
    \centering
    \includegraphics[width=0.98\textwidth]{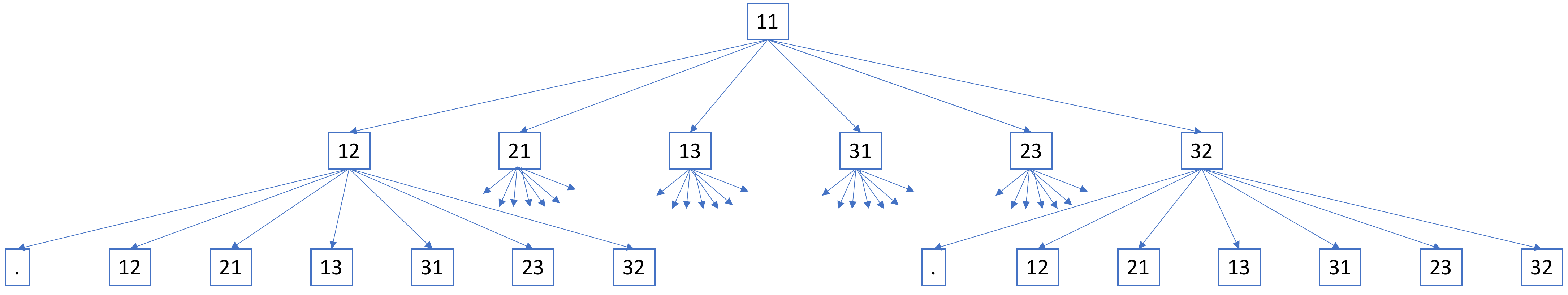}
    \caption{Possible sample paths that can occur during a fork phase when there are three clusters.}
    \label{fig:threeclustertree}
\end{figure}

To analyze the contribution of the fork phase in the overall expression for $\mathbb{E}[M_1]$, consider Figure~\ref{fig:threeclustertree} which shows the different possible sample paths in the fork phase that ends with two blocks mined by cluster 1. 
A sample path starts with the leaf node, and goes up the tree until it reaches the root. 
Each sample path has a certain probability of occurring and has a certain number of blocks mined by cluster 1 that is included in the longest chain. 
For a node $(i, j)$ in Figure~\ref{fig:threeclustertree} in which cluster $i$ wins, let $\alpha_{ij}^i$ be the average number of blocks mined by cluster 1 that are included in the longest chain in the subtree rooted at that node. 
Similarly, for a node $(i, j)$ in Figure~\ref{fig:threeclustertree} in which cluster $j$ wins, let $\alpha_{ij}^j$ be the average number of blocks mined by cluster 1 that are included in the longest chain in the subtree rooted at that node. 
By using the fact that winner of node is dictated by the winner of its parent in the tree, we have following recurrence relations: 
\begin{align}
\alpha^1_{12} = 1 + p_1p_2 \alpha^1_{12} + p_2 p_1 \alpha^1_{21} + p_1 p_3 \alpha^1_{13} + p_3 p_1 \alpha^1_{31} + p_2 p_3 \alpha^2_{2 3} + p_3 p_2 \alpha^3_{32} \\
\alpha^2_{12} = p_1 p_2 \alpha^2_{12} + p_2 p_1 \alpha^2_{21} + p_1p_3\alpha^1_{13} + p_3p_1\alpha^3_{31} + p_2p_3\alpha^2_{23} + p_3 p_2 \alpha^2_{32} \\
\alpha^1_{13} = 1 + p_1p_2\alpha^1_{12} + p_2p_1\alpha^1_{21} + p_1p_3\alpha^1_{13} + p_3p_1\alpha^1_{31} + p_2p_3\alpha^2_{23} + p_3p_2\alpha^3_{32} \\
\alpha^3_{13} = p_1 p_2 \alpha^1_{12} + p_2 p_1 \alpha^2_{21} + p_1 p_3 \alpha^3_{13} + p_3 p_1 \alpha^3_{31} + p_2 p_3 \alpha^3_{23} + p_3 p_2 \alpha^3_{32} \\
\alpha^2_{23} = p_1 p_2 \alpha^2_{12} + p_2 p_1 \alpha^2_{21} + p_1 p_3 \alpha^1_{13} + p_3 p_1 \alpha^3_{31} + p_2 p_3 \alpha^2_{23} + p_3 p_2 \alpha^2_{32} \\
\alpha^3_{23} = p_1 p_2 \alpha^1_{12} + p_2 p_1 \alpha^2_{21} + p_1 p_3 \alpha^3_{13} + p_3 p_1 \alpha^3_{31} + p_2 p_3 \alpha^3_{23} + p_3 p_2 \alpha^3_{32} \\
\alpha^2_{21} = p_1 p_2 \alpha^2_{12} + p_2 p_1 \alpha^2_{21} + p_1 p_3\alpha^1_{13} + p_3 p_1 \alpha^3_{31} + p_2 p_3 \alpha^2_{23} + p_3 p_2 \alpha^2_{32} \\
\alpha^1_{21} = 1 + p_1 p_2 \alpha^1_{12} + p_2 p_1 \alpha^1_{21} + p_1 p_3 \alpha^1_{13} + p_3 p_1 \alpha^1_{31} + p_2 p_3 \alpha^2_{23} + p_3 p_2 \alpha^3_{32} \\
\alpha^3_{31} = p_1 p_2 \alpha^1_{12} + p_2 p_1 \alpha^2_{21} + p_1 p_3 \alpha^3_{13} + p_3 p_1 \alpha^3_{31} + p_2 p_3 \alpha^3_{23} + p_3 p_2 \alpha^3_{32} \\
\alpha^1_{31} = 1 + p_1 p_2 \alpha^1_{12} + p_2 p_1 \alpha^1_{21} + p_1 p_3 \alpha^1_{13} + p_3 p_1 \alpha^1_{31} + p_2 p_3 \alpha^2_{23} + p_3 p_2 \alpha^3_{32} \\
\alpha^3_{32} = p_1 p_2 \alpha^1_{12} + p_2 p_1 \alpha^2_{21} + p_1 p_3 \alpha^3_{13} + p_3 p_1 \alpha^3_{31} + p_2 p_3 \alpha^3_{23} + p_3 p_2 \alpha^3_{32} \\
\alpha^2_{32} = p_1 p_2 \alpha^2_{12} + p_2 p_1 \alpha^2_{21} + p_1 p_3 \alpha^1_{13} + p_3 p_1 \alpha^3_{31} + p_2 p_3 \alpha^2_{23} + p_3 p_2 \alpha^2_{32}. 
\end{align}
Once we solve for all the $\alpha$'s in the above system of equations, we can compute the contribution of the fork phase term in $\mathbb{E}[M_1]$ as 
\begin{align}
p_1^2 ( p_1 p_2 \alpha^1_{12} + p_2 p_1 \alpha^1_{21} + p_1 p_3 \alpha^1_{13} + p_3 p_1 \alpha^1_{31} + p_2 p_3 \alpha^2_{23} + p_3 p_2 \alpha^3_{32}) + \notag \\
 p_2^2(p_1 p_2 \alpha^2_{12} + p_2 p_1 \alpha^2_{21} + p_1 p_3 \alpha^1_{13} + p_3 p_1 \alpha^3_{31} + p_2 p_3 \alpha^2_{23} + p_3 p_2 \alpha^2_{32}) + \notag \\
p_3^2(p_1 p_2 \alpha^1_{12} + p_2 p_1 \alpha^2_{21} + p_1 p_3 \alpha^3_{13} + p_3 p_1 \alpha^3_{31} + p_2 p_3 \alpha^3_{23} + p_3 p_2 \alpha^3_{32}), 
\end{align}
as a fork phase can end with two blocks that are either both mined by cluster 1, both mined by cluster 2 or both mined by cluster 3. 
Combining this with Equation~\eqref{eq: threeclusterinterm} we have 
\begin{align}
\mathbb{E}[M_1] = p_1^2 ( p_1 p_2 \alpha^1_{12} + p_2 p_1 \alpha^1_{21} + p_1 p_3 \alpha^1_{13} + p_3 p_1 \alpha^1_{31} + p_2 p_3 \alpha^2_{23} + p_3 p_2 \alpha^3_{32}) + \notag \\
 p_2^2(p_1 p_2 \alpha^2_{12} + p_2 p_1 \alpha^2_{21} + p_1 p_3 \alpha^1_{13} + p_3 p_1 \alpha^3_{31} + p_2 p_3 \alpha^2_{23} + p_3 p_2 \alpha^2_{32}) + \notag \\
p_3^2(p_1 p_2 \alpha^1_{12} + p_2 p_1 \alpha^2_{21} + p_1 p_3 \alpha^3_{13} + p_3 p_1 \alpha^3_{31} + p_2 p_3 \alpha^3_{23} + p_3 p_2 \alpha^3_{32}) + \notag \\
\sum_{i=1}^\infty p_1^i p_1 (1-p_1) i \\
= p_1^2 ( p_1 p_2 \alpha^1_{12} + p_2 p_1 \alpha^1_{21} + p_1 p_3 \alpha^1_{13} + p_3 p_1 \alpha^1_{31} + p_2 p_3 \alpha^2_{23} + p_3 p_2 \alpha^3_{32}) + \notag \\
 p_2^2(p_1 p_2 \alpha^2_{12} + p_2 p_1 \alpha^2_{21} + p_1 p_3 \alpha^1_{13} + p_3 p_1 \alpha^3_{31} + p_2 p_3 \alpha^2_{23} + p_3 p_2 \alpha^2_{32}) + \notag \\
p_3^2(p_1 p_2 \alpha^1_{12} + p_2 p_1 \alpha^2_{21} + p_1 p_3 \alpha^3_{13} + p_3 p_1 \alpha^3_{31} + p_2 p_3 \alpha^3_{23} + p_3 p_2 \alpha^3_{32}) + \notag \\
\frac{p_1^2}{1-p_1}.  \label{eq:threeclumexp}
\end{align}
The expressions for $\mathbb{E}[M_2]$ and $\mathbb{E}[M_3]$ can be similarly computed as above. 
We can also simply interchange $p_1$ with $p_2$ in the above to compute $\mathbb{E}[M_2]$, and interchange $p_1$ with $p_3$ in the above to compute $\mathbb{E}[M_3]$. 
\end{proof}

\begin{figure} 
\begin{subfigure}[t]{0.3\textwidth}
\centering
\includegraphics[width=\textwidth]{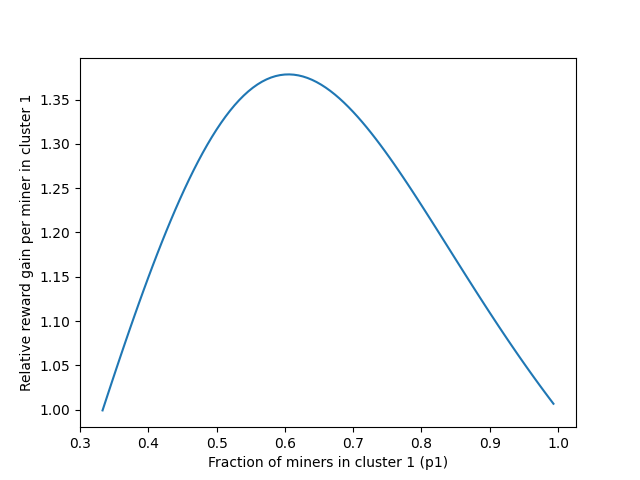} 
\caption{}
\label{fig:threeclustergain}
\end{subfigure}
\hfill 
\begin{subfigure}[t]{0.3\textwidth}
\centering
\includegraphics[width=\textwidth]{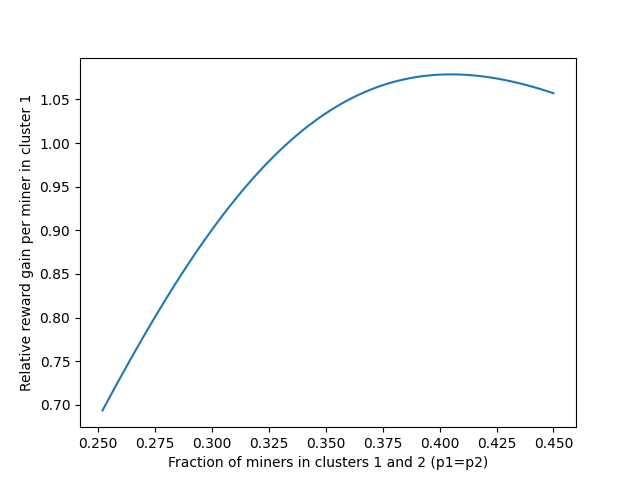} 
\caption{}
\label{fig:threeclustergain_twodominantclusters}
\end{subfigure}
\hfill 
\begin{subfigure}[t]{0.3\textwidth}
\centering
\includegraphics[width=\textwidth]{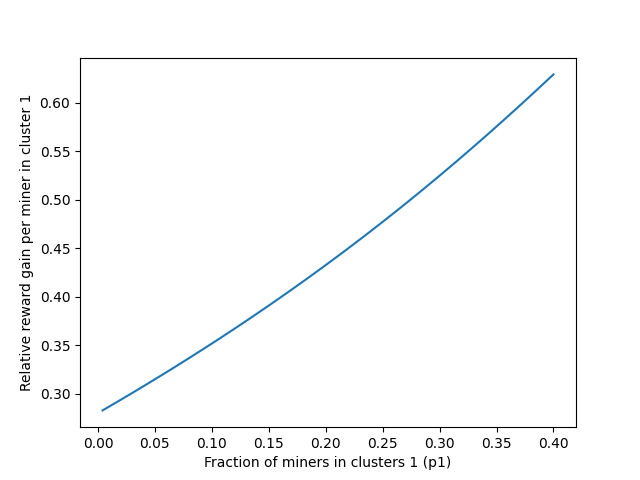} 
\caption{}
\label{fig:threeclustergain_twosmallclusters}
\end{subfigure}
\caption{(a) Relative gain of a miner $v$ in the dominant cluster, when the other two clusters  are of equal size. (b) Relative gain of a miner $v$ in cluster 1, when clusters 1 and 2 are of the same size. (c) Relative gain of a miner $v$ in cluster 1, when cluster 3's size is fixed at 60\% of the network size. }
\label{fig:threeclusterplots} 
\end{figure}

%
\smallskip
\noindent
{\bf Implication.} 
Figure~\ref{fig:threeclusterplots} presents how $\mathbb{E}[F_v]$ varies under different values for parameters $p_1, p_2, p_3$ for a miner $v$ in the first cluster. 
First, we consider the case where cluster 1 is the unique dominant cluster, while cluster 2 and 3 are equal sized smaller clusters. 
Figure~\ref{fig:threeclustergain} plots the percentage of blocks mined by a miner in cluster 1 as its size varies between 33\% to 100\% of the network size. 
The optimum gain for miner $v$ occurs when the size of its cluster is around 60\% of the network size. 

Next, we consider a setting where there are two dominant clusters, cluster 1 and 2, both of the same size. 
Figure~\ref{fig:threeclustergain_twodominantclusters} plots the rewards for miner $v$ in this scenario when the dominant cluster size varies between 25\% to 45\% of network size. 
We see that the relative gain increases beyond 1 only when the dominant cluster size is greater than 33\% of the network size. 

In the third plot, we consider a scenario where there is a single dominant cluster of size equal to 60\% of entire network. 
We then vary the size of cluster 1 between 0 and 40\% as shown in Figure~\ref{fig:threeclustergain_twosmallclusters}. 
We observe that the larger the size of cluster 1, the better it is for miner $v$. 
But in all cases, the relative gain stays below 1.

\label{s:threeclusters}

\section{\S\ref{s: evaluation} Simulation Setup Details}
\label{apx:simdetails}

\begin{figure}[t]
    \centering
    \includegraphics[width=0.98\textwidth]{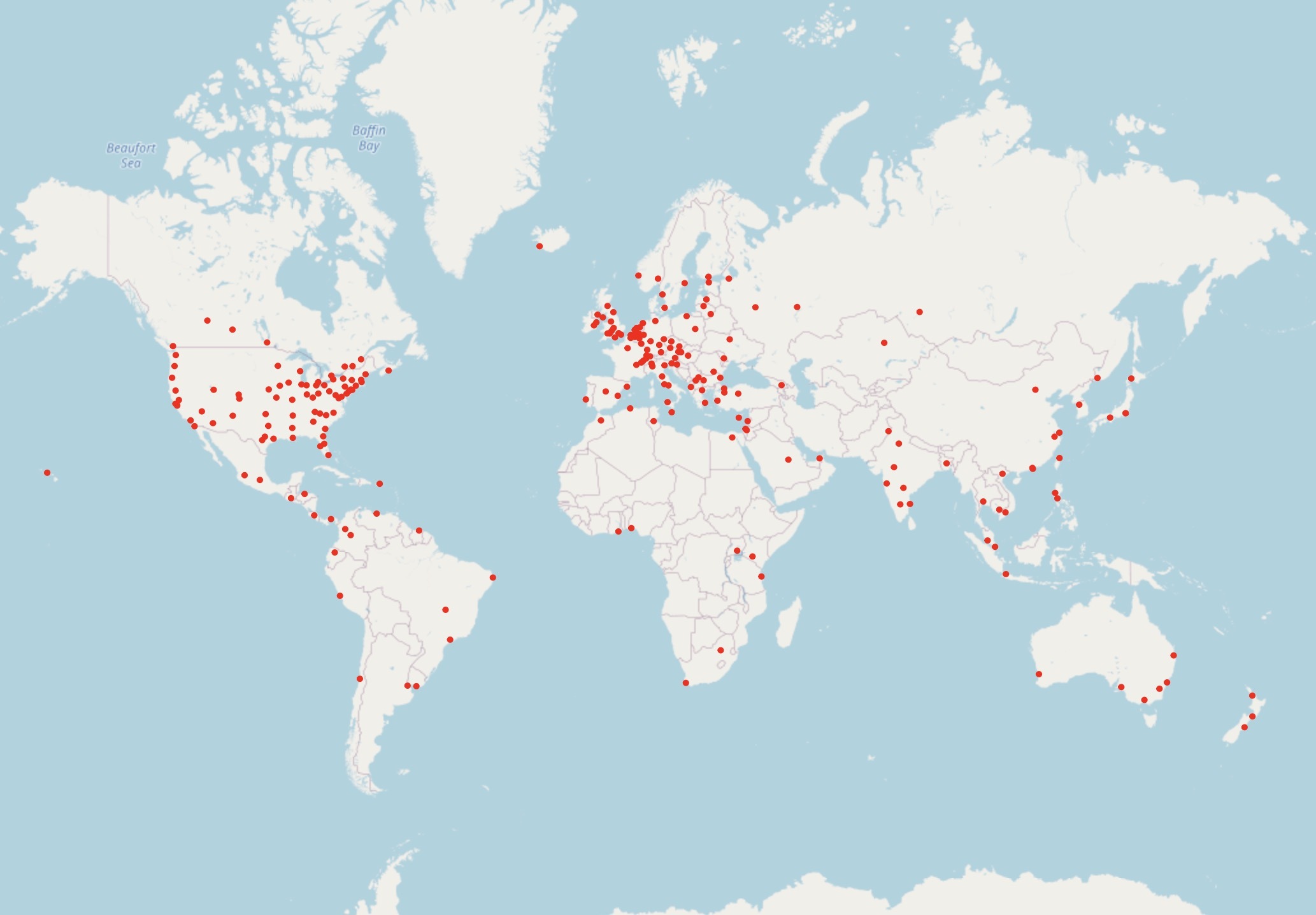}
    \caption{Map showing the locations of the 246 miners used in experiments.}
    \label{fig:map}
\end{figure}

\begin{figure}[t]
    \centering
    \includegraphics[width=0.98\textwidth]{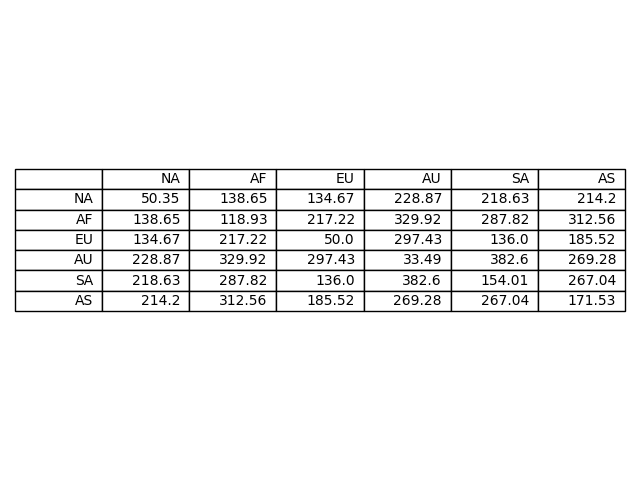}
    \caption{Average latency (ms) between miners from different continents. NA: North America, AF: Africa, EU: Europe, AU: Australia, SA: South America, AS: Asia. }
    \label{fig:continTable}
\end{figure}
 
\end{document}